\newif\ifnotes
\newif\ifcr
\newcommand{\omri}[1]{$\ll$\textsf{\color{blue} Omri: { #1}}$\gg$}
\newcommand{\omri}[1]{}
\definecolor{Maroon}{cmyk}{0, 0.87, 0.68, 0.32}
\numberwithin{algorithm}{section}
\renewcommand{\paragraph}[1]{\vspace{1.5mm}\noindent \textbf{#1}}
\newcommand{\poly}{\mathrm{poly}}
\newcommand{\sk}{\mathsf{sk}}
\newcommand{\pk}{\mathsf{pk}}
\newcommand{\prf}{\mathsf{PRF}}
\newcommand{\prfGen}{\prf.\mathsf{Gen}}
\newcommand{\prfF}{\prf.\mathsf{F}}
\newcommand{\prfk}{\mathsf{prfk}}
\newcommand{\fhe}{\mathsf{FHE}}
\newcommand{\fheGen}{\mathsf{FHE.Gen}}
\newcommand{\fheEnc}{\mathsf{FHE.Enc}}
\newcommand{\fheDec}{\mathsf{FHE.Dec}}
\newcommand{\fheEval}{\mathsf{FHE.Eval}}
\newcommand{\ciph}{\mathsf{ct}}
\newcommand{\Cir}{{C}}
\newcommand{\fheSim}{\mathsf{Sim}}
\newcommand{\fheExt}{\mathsf{Ext}}
\newcommand{\fhek}{\mathsf{fhek}}
\newcommand{\evciph}{\hat{\ciph}}
\newcommand{\sfeEval}{\mathsf{SFE.Eval}}
\newcommand{\prot}[2]{\ve{#1,#2}}
\newcommand{\view}{\mathsf{OUT}}
\newcommand{\zkmD}{\mathsf{D}}
\newcommand{\A}{\mathsf{A}^*}
\newcommand{\zkP}{\mathsf{P}}
\newcommand{\zkV}{\mathsf{V}}
\newcommand{\zkSet}{\mathsf{Setup}}
\newcommand{\zkVSet}{\mathsf{VSetup}}
\newcommand{\zkSim}{\mathsf{Sim}}
\newcommand{\zkmP}{\zkP^*}
\newcommand{\sigmaP}{\Sigma.\mathsf{\zkP}}
\newcommand{\sigmaV}{\Sigma.\mathsf{\zkV}}
\newcommand{\qsigmaP}{\Xi.\zkP}
\newcommand{\qsigmaV}{\Xi.\zkV}
\newcommand{\qsigmaS}{\Xi.\zkSim}
\newcommand{\pke}{\mathsf{PKE}}
\newcommand{\pkeGen}{\pke.\mathsf{Gen}}
\newcommand{\pkeEnc}{\pke.\mathsf{Enc}}
\newcommand{\pkeDec}{\pke.\mathsf{Dec}}
\newcommand{\Com}{\mathsf{Com}}
\newcommand{\Disting}{\mathsf{D}^*}
\newcommand{\Hyb}{\mathsf{Hyb}}
\newcommand{\ket}[1]{|{#1}\rangle}
\newcommand{\TD}{\mathrm{TD}}
\newcommand{\lang}{\mathcal{L}}
\newcommand{\langyes}{\mathcal{L}_{yes}}
\newcommand{\langno}{\mathcal{L}_{no}}
\newcommand{\rel}{\mathcal{R}}
\newcommand{\ins}{x}
\newcommand{\wit}{w}
\newcommand{\qwit}{\ket{w}}
\newcommand{\real}{\mathsf{Real}}
\newcommand{\simulated}{\mathsf{Simulated}}
\newcommand{\crs}{\mathsf{crs}}
\newcommand{\zkcrs}{\mathsf{\tilde{crs}}}
\newcommand{\td}{\mathsf{td}}
\newcommand{\nizk}{\mathsf{NIZK}}
\newcommand{\nizkP}{\mathsf{NIZK.\zkP}}
\newcommand{\nizkV}{\mathsf{NIZK.\zkV}}
\newcommand{\nizkSet}{\mathsf{NIZK.\zkSet}}
\newcommand{\nizkSim}{\mathsf{NIZK.\zkSim}}
\newcommand{\pvk}{\mathsf{pvk}}
\newcommand{\svk}{\mathsf{svk}}
\newcommand{\mpvk}{\mathsf{pvk^*}}
\newcommand{\ek}{\mathsf{ek}}
\newcommand{\textabbrevstyle}[1]{\mbox{#1}}
\newcommand{\textabbrevstylebol}[1]{\mbox{\textbf{#1}}}
\newcommand{\newtextabbrev}[1]{\expandafter\newcommand\csname #1\endcsname{\textabbrevstyle{#1}\xspace}}
\newcommand{\newtextabbrevbol}[1]{\expandafter\newcommand\csname #1\endcsname{\textabbrevstylebol{#1}\xspace}}
\newcommand{\renewtextabbrevbol}[1]{\expandafter\renewcommand\csname
#1\endcsname{\textabbrevstylebol{#1}\xspace}}
\newcommand{\QMA}{\textabbrevstylebol{QMA}}
\newtheorem{definition}{Definition}[section]
\newtheorem{corollary}{Corollary}[section]
\newtheorem{theorem}{Theorem}[section]
\newtheorem{claim}{Claim}[section]
\newtheorem{proposition}{Proposition}[section]
\theoremstyle{remark}
\newcommand{\figref}[1]{Figure~\protect\ref{#1}}
\newcommand{\proref}[1]{Protocol~\protect\ref{#1}}
\newenvironment{boxfig}[2]{\begin{figure}[#1]\fbox{\begin{minipage}{\linewidth}
                        \vspace{0.2em}
                        \makebox[0.025\linewidth]{}
                        \begin{minipage}{0.95\linewidth}
            {{
                        #2 }}
                        \end{minipage}
                        \vspace{0.2em}
                        \end{minipage}}}{\end{figure}}
\newcommand{\pprotocol}[4]{
\begin{boxfig}{h!}{
\begin{center}
\textbf{#1}
\end{center}
    #4
\vspace{0.2em} } \caption{\label{#3} #2}
\end{boxfig}
}
\newcommand{\protocol}[4]{
\pprotocol{#1}{#2}{#3}{#4} }
\newcommand{\negl}{\mathrm{negl}}
\newcommand{\ve}[1]{\langle #1 \rangle}
\newcommand{\set}[1]{\left\{#1\right\}}
\newcommand{\abs}[1]{\left|#1\right|}
\newcommand{\pST}{\; \middle\vert \;}
\newcommand{\zo}{\{0,1\}}
\newcommand{\Nat}{\mathbb{N}}
\newcommand{\secp}{\lambda}
\newcommand{\cupdot}{\mathbin{\mathaccent\cdot\cup}}
\title{Multi-theorem (Malicious) Designated-Verifier NIZK for QMA}
\author{Omri Shmueli\thanks{Tel Aviv University, \texttt{omrishmueli@mail.tau.ac.il}. Supported by ISF grants 18/484 and 19/2137, by Len Blavatnik and the Blavatnik Family Foundation, and by the European Union Horizon 2020 Research and Innovation Program via ERC Project REACT (Grant 756482).}}
\date{\vspace{-5ex}}
\date{}
\begin{document}
\maketitle

\begin{abstract}
	We present the first non-interactive zero-knowledge argument system for QMA with multi-theorem security.
	Our protocol setup constitutes an additional improvement and is constructed in the malicious designated-verifier (MDV-NIZK) model (Quach, Rothblum, and Wichs, EUROCRYPT 2019), where the setup consists of a trusted part that includes only a common uniformly random string and an untrusted part of classical public and secret verification keys, which even if sampled maliciously by the \emph{verifier}, the zero knowledge property still holds.
	The security of our protocol is established under the Learning with Errors Assumption.
	
	Our main technical contribution is showing a general transformation that compiles any sigma protocol into a reusable MDV-NIZK protocol, using NIZK for NP. 
	Our technique is classical but works for quantum protocols and allows the construction of a reusable MDV-NIZK for QMA. 
\end{abstract}

\thispagestyle{empty}
\newpage
\tableofcontents
\thispagestyle{empty}
\newpage
\pagenumbering{arabic}

\section{Introduction} \label{sec:intro}
Zero-knowledge protocols allow to prove statements without revealing anything but the mere fact that they are true. Since their introduction by Goldwasser, Micali, and Rackoff \cite{GoldwasserMR89} they have had a profound impact on modern cryptography and theoretical computer science at large.
While standard zero-knowledge protocols are interactive, Blum, Feldman, and Micali \cite{blum2019non} introduced the concept of a non-interactive zero-knowledge (NIZK) protocol, which consists of a single message sent by the prover to the verifier. NIZK protocols cannot exist in the plain model (i.e. a language with such a NIZK protocol can be decided by an efficient algorithm) but can be realized with a pre-computed setup. The point of the setup is that it can be computed instance-independently and usually, the setup is executed by a trusted third party that generates and publishes a string of bits and sometimes trapdoors are handed to the prover or verifier (or both).

Although existing zero-knowledge protocols for NP cover an array of diverse tasks and in particular, under standard computational assumptions it is known how to construct NIZK protocols for NP \cite{canetti2019fiat, peikert2019noninteractive, brakerski2020nizk}, far less is known about the class QMA, the quantum generalization of NP.
This knowledge gap between NP and QMA, which is present in both interactive and non-interactive zero-knowledge protocols, stems from the fact that many of the techniques that work for NP and more precisely, classical information-processing, usually fail when are needed for the processing of quantum information.

The first expression of the gap between classical and quantum NIZK protocols is that of setup requirements, that is, how much trust and resources the setup needs.
For example, the standard setup in NIZK is called the common reference string (CRS) model, where the trusted party samples a classical string from some specified distribution and publishes it (no trapdoors are handed to either prover or verifier in this model). If the reference string is simply uniformly random then the setup is in the common \emph{random} string model, which is considered to require minimal trust in the NIZK setting, as the trusted party holds no trapdoors whatsoever. NIZK arguments for NP are known to exist in the common random string model under LWE \cite{canetti2019fiat, peikert2019noninteractive}.
In current QMA constructions the setup is comprised at least of a common \emph{reference} string sampled by the trusted party, and an additional public and secret verification keys $(\pvk, \svk)$ where $\pvk$ is published along with the CRS and $\svk$ is kept by the verifier, such that either:
\begin{itemize}
	\item
	$\pvk$ is a quantum state that needs to stay coherent while waiting for the proof by the prover, or
	
	\item
	The pair $(\pvk, \svk)$ can be sampled only by the trusted party and not the verifier.
\end{itemize}

Aside from the above, perhaps the most basic missing part between NIZK protocols for NP and QMA is the existence (or inexistence) of multi-theorem security.
Multi-theorem security considers the \emph{reusability of the setup}, that is, once the setup is computed, any prover can send a proof by a single message repeatedly for many different statements and there is no need to re-compute the setup for every new proof sent and in relation to the above QMA setups: once the CRS and public verification key are published, they are reusable.
However, although multi-theorem security provides the main efficiency advantage to a NIZK protocol over an interactive protocol, we currently don't have non-interactive zero-knowledge protocols for QMA with reusable setups.

Given the gap of knowledge in NIZK techniques between NP and QMA, improving the power of NIZKs for QMA seem as a natural cryptographic goal which we explore in this work.

\subsection{Results}\label{subsec:results}
Under the Learning with Errors (LWE) assumption \cite{Regev09} we resolve the above open question. 
Specifically, we construct a NIZK argument for QMA with multi-theorem security and reduce setup requirements by proving security in the following model: 
\begin{enumerate}
	\item
	The trusted party samples only a common random string $\crs$.
	
	\item
	Given $\crs$, any verifier can sample a pair of classical public and secret verification keys $(\pvk, \svk)$, in particular it is possible that the published $\pvk$ is maliciously-generated.
\end{enumerate}
Given $\crs$ and $\pvk$, any prover can repeatedly give a non-interactive zero-knowledge proof by a single quantum message $\ket{\pi}$.
The above setup model is introduced by Quach, Rothblum, and Wichs in \cite{quach2019reusable} as the malicious designated-verifier model (MDV-NIZK), and has the same minimal trust requirements as the common random string model (but is privately verifiable).

\begin{theorem} [informal] \label{intro_thm}
	Assuming that LWE is hard for polynomial-time quantum algorithms, there exists a reusable, non-interactive computational zero-knowledge argument system for QMA in the malicious designated-verifier model.
\end{theorem}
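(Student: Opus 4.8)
The plan is to prove \thmref{intro_thm} via a general compiler --- \emph{any} computationally sound three-message public-coin protocol (sigma protocol) can be turned into a reusable MDV-NIZK argument, using a NIZK for NP --- and then to instantiate it with a sigma protocol for QMA. Two ingredients are needed, both following from LWE. (i)~A sigma protocol $\Sigma=(\SigP,\SigV)$ for QMA: $\SigP$ receives $\ins$ together with $\qwit^{\otimes\poly(\secp)}$ (polynomially many witness copies), sends a first message $\alpha$, gets a uniform challenge $\beta$ from a polynomial-size space (over each of polynomially many parallel repetitions), answers $\gamma$, and $\SigV$ decides from $(\ins,\alpha,\beta,\gamma)$; it should have completeness, negligible computational soundness error, and honest-verifier zero knowledge via a simulator $\SigS(\ins,\beta)$. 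I additionally want $\Sigma$ to be of ``commit-then-open'' type, so that the verification-relevant data is classical and a response to \emph{every} challenge value of a repetition is computable from the post-$\alpha$ state without re-using the witness --- protocols for QMA of this shape are available in the spirit of Broadbent--Grilo. (ii)~A (multi-theorem) NIZK for NP in the common random string model, which follows from LWE~\cite{peikert2019noninteractive,canetti2019fiat}.

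\paragraph{The compiler.}
Parse the common random string $\crs$ as an NP-NIZK reference string together with the public parameters of an LWE-based (one-out-of-polynomial) oblivious transfer, fixed by the honest uniform $\crs$ to its ``messy'' mode: the mode that is statistically sender-private and in which a trapdoor recovers the receiver's effective choice index. The verifier's $\zkVSet(\crs)$ samples, for each repetition $i$, a uniform challenge index $\beta_i$ and a receiver message carrying it; $\pvk$ collects these messages and $\svk$ the receiver randomness, and this single $\pvk$ serves all future proofs. To prove $\ins\in\langyes$, the prover runs $\SigP(\ins,\qwit^{\otimes\poly(\secp)})$ to obtain $\alpha$, computes the response $\gamma^{(i)}_{j}$ to \emph{every} challenge value $j$ in repetition $i$, answers the $i$-th transfer as a sender offering $\{\gamma^{(i)}_j\}_j$, and appends an NP-NIZK proof that the classical part of $\alpha$ and the transfer answers are well-formed and mutually consistent. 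The quantum message $\ket{\pi}$ consists of the (possibly quantum) explicit part of $\alpha$, the classical $\alpha$-data, the transfer answers, and the NP-NIZK proof; verification recovers each $\gamma^{(i)}_{\beta_i}$ using $\svk$, runs $\SigV$ on $(\ins,\alpha,\beta,(\gamma^{(i)}_{\beta_i})_i)$ together with any residual quantum check on the explicit part of $\alpha$, and verifies the NP-NIZK proof.

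\paragraph{The three properties.}
Completeness is immediate. For soundness, fix a NO-instance and a \QPT cheating prover: since the indices $\beta$ are computationally hidden from it (receiver security of the OT), its transfer answers are computationally independent of $\beta$; for any fixed answers and any repetition $i$, the fraction of challenge values $j$ for which the recovered $\gamma^{(i)}_j$ accepts w.r.t.\ the prover's own $\alpha^{(i)}$ is bounded by the soundness error of $\Sigma$ (otherwise that error is violated), so over uniform $\beta$ the verifier accepts only with negligible probability, and the NP-NIZK rules out the residual malformed-message strategies. For malicious-verifier, multi-theorem zero knowledge, the simulator uses only the honest $\crs$: with its trapdoor it extracts from the (possibly maliciously sampled) $\pvk$ the effective indices $\beta$, runs $\SigS(\ins,\beta)$ to get $(\alpha,\gamma_\beta)$, offers $\gamma^{(i)}_{\beta_i}$ in the statistically-revealed slot and arbitrary data elsewhere, and proves well-formedness honestly --- that statement is true, and its witness, the transfer randomness, is known to the simulator. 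Indistinguishability follows from the statistical sender-privacy of the messy-mode OT (for the hidden slots) and the HVZK of $\Sigma$ (for the revealed slots and $\alpha$), via a hybrid applied once per proof; it holds for \emph{every} $\pvk$ precisely because the only trapdoor needed lives in the trusted $\crs$.

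\paragraph{Main obstacle.}
I expect the crux to be two points. First, making soundness genuinely \emph{reusable} against a \emph{quantum} prover: the prover cannot be rewound, and across many proofs it observes the verifier's accept/reject verdicts (the verification-oracle attack), so one must show this leakage cannot let it recover or bias $\beta$ --- this is where the NP-NIZK component and the precise security flavour demanded of the designated-challenge OT (hiding of $\beta$ even under verdict access) become essential, and where the malicious-ZK trapdoor must be confined to the honest $\crs$ rather than placed anywhere near $\pvk$. Second, on the quantum side: supplying a sigma protocol for QMA of the ``commit-then-open'' shape, with only a \emph{polynomial} number of witness copies and a usable HVZK simulator, so that the prover can honestly answer \emph{every} challenge value of each repetition without cloning its witness, and so that a purely classical NP-NIZK can still certify consistency of the partly quantum first message with the transfer answers.
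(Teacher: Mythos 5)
There is a genuine gap, and it is precisely the issue the paper identifies as the central obstacle. In your compiler the challenge $\beta=(\beta_1,\dots,\beta_k)$ is sampled once by $\zkVSet$, embedded in the OT receiver messages of $\pvk$, and then reused verbatim for \emph{every} instance. This is the single-theorem design, and it is broken by the verification-oracle attack: a malicious prover takes a \emph{yes}-instance with a witness, honestly answers all repetitions except repetition $i$, and in repetition $i$ commits to the response for a guessed value of $\beta_i$; by completeness the verifier's accept/reject verdict then reveals $\beta_i$, so $\beta$ is decoded bit-by-bit in polynomially many queries, after which the prover forges an accepting transcript for any no-instance (special soundness of a commit-and-open sigma protocol gives no guarantee once $\beta$ is known in advance). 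Your soundness argument -- ``the indices $\beta$ are computationally hidden, so the answers are independent of $\beta$'' -- is valid only for the first proof; receiver security of the OT does not survive the verdict oracle, since the verdict is computed \emph{using} $\beta$. You correctly flag this as the crux in your final paragraph, but the construction as written does not resolve it, and no generic strengthening of the OT can: the leakage is through the sigma-protocol verifier itself, not through the OT.

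The paper's fix, which is its main technical contribution and is absent from your proposal, is to make the challenge \emph{instance-dependent}: $\pvk$ contains an FHE encryption of a PRF key $\prfk$ rather than of a challenge, and the prover homomorphically evaluates the circuit that first computes $\beta_x=\prfF_{\prfk}(x)$ and then the response $\gamma=\qsigmaP_3(\beta_x,r)$. Decoding $\beta_{x}$ for yes-instances $x$ then yields nothing about $\beta_{x^*}$ for a no-instance $x^*$, and the soundness reduction can replace the PRF output on $x^*$ by a truly random challenge. The extraction machinery you would still need (a PKE with pseudorandom public keys whose key is hidden in the random string, plus NP-NIZK consistency proofs on both sides, so that the reduction can extract the prover's randomness $r$ and the simulator can extract $\prfk$ from a maliciously formed $\pvk$) is close in spirit to your trapdoored-OT/NP-NIZK layer, so the surrounding architecture of your proposal is sound; it is the fixed-challenge core that must be replaced. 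A secondary point: your compiler requires the prover to publish responses to \emph{every} challenge value of every repetition, whereas the paper only needs the response to the single derived challenge to be computable by a classical circuit from $(\beta,r)$, a property it gets from the Broadbent--Grilo CLDM protocol; your stronger requirement is satisfiable but is not needed once the FHE-based evaluation replaces the OT.
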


\paragraph{Main Technical Contribution: General Sigma Protocol MDV-NIZK Compilation.}
Technically, we deviate completely from previous NIZK constructions for QMA and our main contribution is showing how given a NIZK for NP it is possible to compile \emph{any} general sigma protocol into a reusable MDV-NIZK protocol.
Our technique is simple and purely classical but works also for quantum zero-knowledge protocols and in particular can be used for showing a reusable MDV-NIZK for QMA. Further details are given in the technical overview below.

\subsection{Technical Overview}
We next describe our construction of a multi-theorem-secure MDV-NIZK protocol for QMA. For a discussion about the possibility of constructing a NIZK protocol for QMA in the CRS model see subsection \ref{subsec:QMA_CRS}, and for an overview of NIZK models and previous work on NIZK for QMA see subsection \ref{subsec:QMA_previous_work}.

We deviate from previous approaches of NIZK for QMA and take a different (and very natural) approach: Find a "classical anchor" in quantum zero-knowledge protocols and then solve the problem by having a NIZK for NP.
As such we currently restrict our attention to an even simpler, purely-classical question: Given any sigma protocol $(\sigmaP, \sigmaV)$, generically compile it into a multi-theorem-secure MDV-NIZK while assuming minimal properties of the protocol\footnote{In particular, we do not assume that the message $\alpha$ is classical.}. We will start with considering classical sigma protocols and later see what changes should take place in order for the technique to work for quantum protocols.

\paragraph{From a Sigma Protocol to a Single-Theorem-Secure MDV-NIZK.}
A sigma protocol is a 3-message public-coin proof system (with some mild zero knowledge properties), where the 3 messages are denoted by $\alpha$, $\beta$ and $\gamma$ (i.e. $\beta$ is a random string and is called "the challenge string").
Our first step is to construct a MDV-NIZK protocol with only single-theorem security out of a sigma protocol and is very simple.

In a sigma protocol, since the verifier's message $\beta$ is a random string it is independent of any other information, additionally, our second need from it is that it stays hidden (until after the prover sends its first message $\alpha$). The verifier can compute its public verification key, which is computed instance-independently, as a function of $\beta$: The public verification key $\pvk$ is an FHE-encrypted random challenge $\beta$ and the secret verification key $\svk$ is the FHE decryption key and the challenge string,
$$
\pvk = \fheEnc_{\fhek}(\beta), \enspace \svk = (\beta, \fhek) \enspace . 
$$
Given the public verification key $\pvk$, the 1-message proof procedure for $x \in \lang$ goes as follows:
\begin{itemize}
	\item
	$\zkP$ computes the first sigma protocol message $\alpha \gets \sigmaP(x, \wit)$, where $\wit \in \rel_{\lang}(x)$.
	
	\item
	$\zkP$ computes $\gamma$ the last protocol message under the encryption, that is, $\zkP$ performs the homomorphic evaluation $\evciph_\zkP \gets \fheEval(\sigmaP_3, \fheEnc_{\fhek}(\beta))$.
	
	\item
	As the proof, $\zkP$ sends $\alpha$ out in the open and $\gamma$ under the encryption, that is, the proof is $\pi = (\alpha, \evciph_{\zkP})$.
\end{itemize}
In order for the proof to stay zero-knowledge, the homomorphic evaluation needs to be circuit-private.
The verification algorithm is straightforward: Given $\svk$, an instance $x$ and a proof $\pi = (\alpha, \evciph_{\zkP})$, the verifier decrypts $\evciph_{\zkP}$ to get $\gamma$, and accepts iff the sigma protocol verifier accepts $\sigmaV(x, \alpha, \beta, \gamma) = 1$.

\paragraph{Is the Above Protocol Multi-Theorem-Secure?}
While it is intuitively clear that the described construction is secure for a single use of the setup (that is, the above should, with some modifications, yield a single-theorem-secure MDV-NIZK) it is provably not multi-theorem-secure by a standard attack.
Sigma protocols are usually parallel repetitions of 3-message zero-knowledge protocols, for example, consider the sigma protocol which is the parallel repetition of the zero-knowledge protocol for Graph Hamiltonicity \cite{blum1986prove}, which is as follows: Given a Hamiltonian cycle $C$ in a graph $G = (V, E)$, the prover samples a random permutation $\varphi : V \rightarrow V$ of the vertices and commits to the permuted graph $\varphi(G)$\footnote{That is, the prover commits to all of the cells in the adjacency matrix that represents the graph $\varphi(G)$.}.
The verifier then sends a random bit $b$, and the prover answers accordingly:
\begin{itemize}
	\item
	If $b = 0$ it is considered as a validity check, and the prover opens all commitments and sends $\varphi$. The verifier accepts if indeed the committed graph is $\varphi(G)$.
	
	\item
	If $b = 1$ it is considered as the cycle check, and the prover opens commitments only for the subgraph $\varphi(C)$.
	The verifier accepts if the opening shows a Hamiltonian cycle.
\end{itemize} 

If the sigma protocol used in the above MDV-NIZK construction is the parallel repetition of the zero-knowledge protocol for Hamiltonicity\footnote{We take the Hamiltonicity protocol only as a concrete easy example and in fact any other sigma protocol can take the role of this protocol in our context of attacking the soundness.}, then there is a polynomial-time malicious prover $\zkmP$ that given multiple access to the verifier's verdict function $\zkV(\svk, \cdot)$ using the same public/secret verification key pair, can decode the encrypted challenge string $\beta$ (which is polynomially-many random bits, each bit is for the $i$-th parallel repetition of the zero-knowledge protocol) and consequently break the soundness.

$\zkmP$ takes a Hamiltonian graph $G$ and a Hamiltonian cycle $C$ in it, and will decode the entire $\beta = (b_1, b_2, \cdots, b_k)$ bit-by-bit: To decode $b_i$, $\zkmP$ will honestly execute the zero-knowledge protocol prover's algorithm for all indices but index $i$ (that is, for all $j \neq i$, it will honestly compute $\Com(\varphi_j(G))$ and under the encryption, the opening of either the entire graph and the permutation of just the cycle $\varphi_j(C)$), for which it is going to operate as follows.
$\zkmP$ will guess that $b_i = 0$ and send a commitment to a permutation of the graph out in the open and under the encryption act as if $b_i = 0$ regardless of the actual value of $b_i$.
By the verifier's acceptance or rejection it will know whether the bit was $0$ or $1$.
After decoding $\beta$ the prover can now use this information to "prove" that any graph $G$ is Hamiltonian.

\paragraph{From Single-Theorem to Multi-Theorem Security.}
In the above attack the prover heavily relied on a specific operation: It uses a yes-instance (in the above case, a Hamiltonian graph $G$), in order to decode the random challenge $\beta$ and then goes on to use the knowledge of $\beta$ to give a false proof for a no-instance (again, in the above, a non Hamiltonian graph $G^*$).

Crucially, $\zkmP$ does not know how to decode $\beta$ when the graph is not Hamiltonian.
More specifically, in the above we decode $\beta$ bit-by-bit rather than all at once, and this ability comes from the fact that $G$ is Hamiltonian and the zero-knowledge protocol is complete, thus $\zkmP$ can be sure that if it honestly executes the zero-knowledge protocol for all indices but $i$, the only index that can make the proof get rejected is $i$. In this isolation, checking whether the challenge bit $b_i$ is $0$ or $1$ becomes trivial.
However, if the graph is not Hamiltonian then the prover cannot know which index made the proof get rejected because all $k$ indices are prone to rejection. Formally, by the soundness of the sigma protocol, we know that the answer from the verdict function of the verifier in this case will always be a rejection for any polynomial (or even sub-exponential) number of queries, with overwhelming probability. This means in particular that the prover cannot decode anything through the oracle access to the verdict function.

Our fix to the first protocol is based on the above observation: If we could make the random challenge $\beta$ change with the instance at hand it seems that the decoding attack is neutralized, because even if the prover decodes $\beta_G$ the challenge for a Hamiltonian graph $G$, it doesn't have information about $\beta_{G^*}$ the challenge of some non Hamiltonian $G^*$.
Since the instance \emph{$x$ is in particular a classical string} we can make the challenge change with the instance:
The public verification key will not be an encrypted challenge $\beta$ but instead will be a secret key $\prfk$ of a pseudorandom function $\prf$. The prover will compute $\alpha$ out in the open as before but the homomorphic evaluation changes: under the encryption, $\zkP$ will compute the challenge string as the PRF's output on the instance $\beta_x = \prfF_{\prfk}(x)$, and then compute $\gamma$ for the challenge $\beta_x$.

\paragraph{Extraction by Non-interactive Zero Knowledge for NP.}
Up to this point we only came close to constructing a provably-secure MDV-NIZK. Indeed, we didn't even use any NIZK tools yet for NP, and in order to prove the security of our construction we need knowledge extraction from both the prover and verifier.

To prove soundness, our thought process is roughly the following: We know that the prover computes $\gamma$ obliviously under the FHE, more precisely, it homomorphically evaluates the circuit $C_{x, r}$ that computes $\beta_x = \prfF_{\prfk}(x)$ and then given $\beta_x$ computes $\gamma$. The part of the circuit $C_{x, r}$ that computes $\gamma$ from $\beta_x$ is the "non-trivial" part of the circuit and is determined by a secret string $r$ (which is the information that the honest sigma protocol prover uses in order to compute $\gamma$, this information is the randomness of the prover and possibly the witness).
If we could extract $r$ from a prover (e.g. by the prover giving a proof of knowledge on the non-trivial part of the circuit $C_{x, r}$) that successfully cheats in the NIZK protocol then we could get a successfully cheating prover for the sigma protocol and thus prove security. To see this, note that by the hiding of the FHE and by the pseudorandomness of the PRF, even if as the public verification key we send an encryption of $0$ instead of an encryption of the PRF secret key, the string $r$ still needs to yield a circuit $C_{x, r}$ that does well in generating a satisfying $\gamma$ for a now-truly-random challenge $\beta$.

On the zero knowledge side we also need extraction, and we start with recalling a basic property of a sigma protocol: if we know the challenge string $\beta$ before sending the first message $\alpha$ then we can simulate a view that is indistinguishable from the real interaction with the honest prover.
This means that the information we want to extract from the malicious verifier is the secret PRF key $\prfk$ that in particular holds the information for obtaining $\beta_x$.

We solve both extraction tasks by a combination of a two-sided NP NIZK and a public-key encryption scheme with pseudorandom public keys.
Given the existence of a PKE scheme $(\pkeGen, \pkeEnc, \pkeDec)$ with pseudorandom public keys of length $\ell$ we take the common random string of our protocol to be (1) the common random string of an NP NIZK $(\nizkSet, \nizkP, \nizkV)$ protocol which we denote with $\crs$, concatenated with (2) a random string of length $\ell$ which we denote with $\ek$ (for extraction key).

We will let each of the parties encrypt, using $\pkeEnc_{\ek}(\cdot)$, the secrets that we want to extract and then use the NIZK to prove consistency between the content of the PKE encryption and the protocol computations. More precisely, as part of its 1-message proof, the prover will give a proof $\pi_{\zkP}$ that the string $r$ encrypted using the PKE yields the (canonical) circuit $C_{x, r}$ that it used for the (circuit-private) homomorphic evaluation that generated $\gamma$, and the verifier, as part of its public verification key, will give a proof $\pi_{\zkV}$ that the PRF key $\prfk$ that is encrypted using the PKE is the same key encrypted with the FHE.
Note that the information that the parties encrypt using a random string instead of a real PKE key stays secure due to the fact that a real key is indistinguishable from a random string, and thus an adversary that manages to break the PKE when it uses a random string as the public key can break the pseudorandomness property of the public keys.

When wanting to extract information (either in the soundness reduction or in the zero-knowledge simulation), we will sample $\ek$ using the PKE key-generation algorithm $(\ek, \sk) \gets \pkeGen$, and since the public keys are pseudorandom the change in key distribution won't be felt by either of the parties.
At that point the parties encrypt their secrets and prove they do so using the NIZK, and the extractor can just use the PKE decryption $\pkeDec_{\sk}(\cdot)$ to obtain the secrets.

\paragraph{Compiling Quantum Protocols.}
Our technique so far is entirely classical and compiles classical sigma protocols. We now ask whether it works to compile quantum sigma protocols. This can be answered in turn by answering the following question: what properties of the sigma protocol \emph{exactly} did we use in order for the MDV-NIZK protocol to work?

It can be verified that even if we don't assume nothing on the sigma protocol that we compile, every action in the MDV-NIZK protocol except the homomorphic evaluation of the circuit $C_{x, r}$ can stay exactly the same.
Regarding the homomorphic evaluation, the issue that we have is the following: In order to still be able to extract the information $r$ of the circuit $C_{x, r}$ from the prover, the computation that takes $\beta_x$ and outputs $\gamma$ needs to be a classical circuit.
This is not necessarily the case in a quantum protocol. For example, in the quantum zero-knowledge protocol for QMA of \cite{broadbent2016zero} (which is also the basis for the quantum NIZK protocol of \cite{coladangelo2019non}), in order to generate $\gamma$ given $\alpha, \beta$, first a quantum Clifford operation that is chosen with respect to $\beta$ needs to be executed on $\alpha$, followed by a measurement. Then, the prover proves in ZK that the classical string obtained by the measurement satisfies some properties\footnote{in that protocol it is also needed that the verifier itself makes the Clifford operation and measurement, which makes the protocol more challenging to use for a NIZK protocol.}.
Luckily, we identify a different quantum protocol that in fact does satisfy the property that $\gamma$ can be computed by an entirely classical circuit.

We consider the Consistency of Local Density Matrices (CLDM) problem \cite{liu2006consistency}, which is a QMA problem with some special properties. In \cite{broadbent2019zero} Broadbent and Grilo show that CLDM is QMA-complete and how to construct a very simple quantum zero-knowledge protocol for it.
The \cite{broadbent2019zero} zero-knowledge protocol for CLDM is as follows: Given a quantum witness $\qwit$, the protocol starts with the prover sending a quantum one-time pad encryption of $\qwit$ as the message $\alpha$. More precisely, for a length-$l$ witness it samples classical random pads $a, b \gets \{ 0, 1 \}^l$, applies
$$
\bigotimes_{i \in [l]} \left( X^{a_i}\cdot Z^{b_i} \right) \cdot \qwit \enspace ,
$$
and then sends as $\alpha$ the transformed quantum state and classical commitments to the QOTP keys $a, b$.
For a random challenge $\beta$, the prover response $\gamma$ is an opening to part of the state.
We find the CLDM problem and specifically the zero-knowledge protocol for it especially attractive for our purposes as \emph{$\gamma$ is only a function of the randomness of the prover and the challenge $\beta$}, which in particular means that the circuit $C_{x, r}$ can stay classical in our setting.

Finally, by using the sigma protocol yielded by the parallel repetition of the zero-knowledge protocol from \cite{broadbent2019zero} we obtain a clean and simple non-interactive computational zero-knowledge argument system for the class QMA in the malicious designated-verifier model:

\begin{enumerate}
	\item {\bf Common Random String:} $(\crs, \ek)$.
	
	\item {\bf Public and Secret Verification Keys:} $\prfk \gets \prfGen(1^\secp), \fhek \gets \fheGen(1^\secp)$,
	$$
	\pvk = \big(\fheEnc_{\fhek}(\prfk), \; \pkeEnc_{\ek}(\prfk), \;\pi_\zkV \big), \enspace \svk = \big( \prfk, \; \fhek \big) \enspace .
	$$
\end{enumerate}
For any prover that wishes to give a proof for an instance $x \in \langyes$, it executes the following:
\begin{itemize}
	\item {\bf Proof:}
	If $\pi_\zkV$ is valid, $\zkP$ computes $\alpha \gets \qsigmaP(\qwit ; r)$ and sends
	$$
	\ket{\pi} = \big( \alpha, \; \fheEval(C_{x, r}, \fheEnc(\prfk)), \; \pkeEnc_{\ek}(r), \; \pi_\zkP \big) \enspace .
	$$
\end{itemize}

\subsection{Related Work}
In this section we discuss the main challenges in the construction of non-interactive zero-knowledge protocols for QMA (specifically in the CRS model) and the previous works on QMA NIZKs.

\subsubsection{Can we Build a NIZK protocol for QMA in the CRS model?} \label{subsec:QMA_CRS}
In short, the answer to the above question is that we don't know, and this section does not aim to answer it. This section is intended to give some evidence to why constructing a NIZK for QMA in the CRS model seem to require a different set of techniques from what we currently have for NP.
In what follows we will start with briefly recalling how NIZKs for NP are constructed and then understand why current approaches fail in the setting of quantum proofs.

\paragraph{NP, Fiat-Shamir and Correlation Intractability.} \label{par:FS_and_NP}
In order to construct a non-interactive zero-knowledge protocol for NP under standard assumptions, the construction starts with a sigma protocol $(\sigmaP, \sigmaV)$.
To make the protocol non-interactive, the Fiat-Shamir transform is applied: By assuming public oracle access to a random function $F$, the prover applies it to $\alpha$ and treat its (random-string) output $F(\alpha)$ as the challenge string $\beta$. It then computes $\gamma$ and sends all of this information to the verifier, who makes sure that $\beta$ was rightfully generated $\beta = F(\alpha)$, and that the sigma protocol verifier $\sigmaV(\alpha, \beta, \gamma)$ accepts.
Since we don't know how to construct a cryptographic primitive that acts as a publicly-computable random function, the above protocol is secure only in the random oracle model, that is, only if we directly assume public access to such random function $F$.

In order to prove the security of the NIZK protocol in the standard model (with access to a common reference string rather than a random oracle), the final part of the construction involves swapping the random function $F$ with a new, special hash function $H$ - this general technique of swapping $F$ with a special hash function $H$ is usually called the Correlation Intractability (CI) paradigm \cite{canetti2004random}.
The properties of the hash function $H$ or the meaning of correlation intractability are less relevant to this overview, but it is suffices to say that under the LWE assumption it is known how to construct a hash function $H$ that can be swapped with $F$ in the FS transform and where the protocol can be proven secure \cite{canetti2019fiat, peikert2019noninteractive}.

\paragraph{Can we use Known Classical NIZK Techniques for Quantum Protocols?} \label{par:FS_and_QMA}
There are two known routes for getting a quantum-secure NIZK for NP in the CRS model, the first is through the FS transform and CI (which also uses only standard assumptions, described above) and the second is through the hidden bits model and indistinguishability obfuscation.
It is natural to ask whether we can use these techniques for QMA (the question of whether the FS transform can be used for quantum protocols was asked as one of the open questions in section 1.4 of \cite{broadbent2019zero}).

We first review the ability to use the FS transform (and in particular correlation intractability) for QMA and explain why there is an issue with the no-cloning theorem.
In the quantum setting, sigma protocols $(\qsigmaP, \qsigmaV)$ \cite{broadbent2019zero, broadbent2016zero} are quite the same but with the main difference that the first message $\alpha$ is quantum (and of course, the prover takes as input a quantum witness $\qwit$ rather than classical).
Recall that when we use the FS transform on a sigma protocol in order to generate a NIZK, for the protocol to be complete, when the parties act honestly then the verifier needs to verify that the random function $F$ yields the challenge, that is $F(\alpha) = \beta$. This means that now $F$ needs to be a quantum transformation such that for $x \in \langyes$ and an honestly generated $\alpha \gets \qsigmaP(\qwit)$, $F(\alpha)$ is always the same classical string (with overwhelming probability).
Now, denote by $s$ the classical string s.t. $F(\alpha) = s$, and we have a generating circuit for the quantum witness: $\qwit = \qsigmaP^{\dagger}(\cdot)\cdot F^{\dagger}\cdot \ket{s}$, where the inverse versions of $F$ and $\qsigmaP$ are purified.
This seems to violate the no-cloning theorem in the following manner: the prover gets a copy of the witness and can generate a generating circuit for the witness state, this circuit can be used to generate arbitrarily many copies of the state.
Finally, because we can always consider a trivial language with a dummy witness, and take the quantum witness to be some unclonable state (for example, a pseudorandom quantum state) we get a contradiction to the no-cloning theorem.

Even if we aim to construct a NIZK using the FS transform for QCMA, the subclass of QMA where the verification algorithm is still quantum but the witness is classical, the problem is not seemed to be solved.
The reason, is that we don't know how to construct sigma protocols for QCMA where the first message $\alpha$ is classical, and the same contradiction to the no-cloning theorem holds.

The second known route of obtaining a quantum-secure NIZK protocol for NP in the CRS model is through the hidden bits model \cite{feige1999multiple} which is implementable by sub-exponentially-secure indistinguishability obfuscation \cite{bitansky2016perfect}.
In the hidden bits model, intuitively (and roughly), the trusted party samples as the common reference string a commitment to a string sampled from some distribution (where by using a trapdoor permutation, the prover can open the commitments efficiently), and the prover proves that the instance at hand $x \in \langyes$ satisfies some property related to the string underlying the commitments.
Even if we are willing to assume the very strong cryptographic assumptions which are needed for the realization of this protocol (i.e. sub-exponentially-secure post-quantum indistinguishability obfuscation), it is currently unknown how to use the hidden bits model to instantiate non-interactive zero-knowledge quantum protocols.

\subsubsection{Relaxations of the CRS Model and Previous Work} \label{subsec:QMA_previous_work}
The constructions of NIZKs for NP discussed in subsection \ref{subsec:QMA_CRS} are implicitly in the CRS model, where the setup consists of a string that is sampled and published by the trusted party, in particular, nor the prover or verifier hold any trapdoors over the setup.
Sometimes when it is unknown how to build a NIZK in the CRS model (or unknown how to minimize the assumptions for building one) we turn to relaxations of the CRS model.
For example, in the designated-verifier model (DV-NIZK) \cite{pass2006construction} the trusted party samples, along with the CRS, a pair of public and secret verification keys $(\pvk, \svk)$, publishes $\pvk$ along with the CRS and hands $\svk$ only to the verifier. Another example is the designated prover model (DP-NIZK) \cite{kim2019multi}, which is analogous to the DV-NIZK model, only that the prover is the one who gets a secret, now-\emph{proof} key.

It is a well known fact in the design of NIZKs that when the verifier holds a secret verification key (e.g. in the DV-NIZK model) then multi-theorem zero knowledge can be achieved generically by the compiler of \cite{feige1999multiple}, but multi-theorem soundness becomes non-trivial. For example, it is possible (and is sometimes provably the case) that the prover can decode the verifier's secret key by having access multiple times to the verifier's verdict function, consequently breaking the soundness of the protocol.
Indeed, one example is that until the works of \cite{quach2019reusable, lombardi2019new}, based on \cite{pass2006construction} it was only known how to get single-theorem-secure DV-NIZK for NP, and another example is that this is the current situation with QMA constructions of NIZK protocols.

The QMA NIZK protocol of Broadbent and Grilo \cite{broadbent2019zero} is in the secret parameters model (i.e. the protocol is both designated-prover and designated-verifier and both parties get secret keys from the trusted party) but is a proof system and has \emph{statistical} soundness rather than the computational soundness we achieve. The protocol of Coladangelo, Vidick and Zhang \cite{coladangelo2019non} is in a model that is somewhat between the common reference string model and the DV-NIZK model, where the trusted party samples a common reference string and the verifier itself samples a pair $(\pvk, \svk)$ where $\pvk$ is a quantum state.
Both of the abovementioned protocols are not reusable.

Outside of the standard model, an additional construction by Alagic, Childs, Grilo and Hung \cite{alagic2019non} yields a QMA NIZK protocol in the quantum random oracle model (with additional setup in the secret parameters model) which is both reusable and classical-verifier.

There are two main issues with letting the trusted party sample secret keys for any of the parties: First, the trust requirements of the setup now increase as the party receiving the secret key should assume that the trusted party handles its secret information securely. The second issue is that of centralization of computational resources: for example, in the DV-NIZK model, the trusted party is now responsible for sampling a fresh pair $(\pvk, \svk)$ for every new verifier that wishes to use the protocol, which is very different from the CRS setting where it samples a string and from that point on can terminate.

The \emph{malicious} designated-verifier (MDV-NIZK) model \cite{quach2019reusable, lombardi2019new} seeks to solve the above two problems, which is also the model of our protocol.
In the MDV-NIZK model the trusted party only samples a common random string, and then, any verifier wishing to use the protocol can sample \emph{by itself} a pair of classical keys $(\pvk, \svk)$ and publish $\pvk$. The protocol then stays secure even if the public key $\pvk$ is maliciously-generated.

\subsubsection*{Acknowledgments}
We deeply thank Nir Bitansky and Zvika Brakerski for helpful discussions during the preparation of this work.

\section{Preliminaries}\label{sec:prel}
\noindent We rely on standard notions of classical Turing machines and Boolean circuits:

\begin{itemize}
	\item
	A \PPT algorithm is a probabilistic polynomial-time Turing machine.
	
	\item
	Let $M$ be a \PPT and let $x$ denote the random variable which is the output of $M$.
	Whenver the entropy of the output of $M$ is non-zero, we denote the random experiment of sampling $x$ with $x \gets M(\cdot )$. If the entropy of the output of $M$ is zero (i.e. $M$ is deterministic), we denote $x = M(\cdot)$.
	
	\item
	We sometimes think about \PPT algorithms as polynomial-size uniform families of circuits, these are equivalent models.
	A polynomial-size circuit family $\mathcal{C}$ is a sequence of circuits $\mathcal{C} = \set{C_\secp}_{\secp\in\Nat}$, such that each circuit $C_\secp$ is of polynomial size $\secp^{O(1)}$.
	We say that the family is uniform if there exists a deterministic polynomial-time algorithm $M$ that on input $1^\secp$ outputs $C_\secp$.
	
	\item
	For a \PPT algorithm $M$, we denote by $M(x;r)$ the output of $M$ on input $x$ and random coins $r$. For such an algorithm and any input $x$, we write $m\in M(x)$ to denote the fact that $m$ is in the support of $M(x;\cdot)$.
\end{itemize}

\noindent We follow standard notions from quantum computation.
\begin{itemize}
	\item
	A \QPT algorithm is a quantum polynomial-time Turing machine.
	
	\item
	We sometimes think about \QPT algorithms as polynomial-size uniform families of quantum circuits, these are equivalent models.
	A polynomial-size quantum circuit family $\mathcal{C}$ is a sequence of quantum circuits $\mathcal{C} = \set{C_\secp}_{\secp\in\Nat}$, such that each circuit $C_\secp$ is of polynomial size $\secp^{O(1)}$.
	We say that the family is uniform if there exists a deterministic polynomial-time algorithm $M$ that on input $1^\secp$ outputs $C_\secp$.
	
	\item
	An interactive algorithm $M$, in a two-party setting, has input divided into two registers and output divided into two registers.
	For the input, one register $I_m$ is for an input message from the other party, and a second register $I_a$ is an auxiliary input that acts as an inner state of the party.
	For the output, one register $O_m$ is for a message to be sent to the other party, and another register $O_a$ is again for auxiliary output that acts again as an inner state. For a quantum interactive algorithm $M$, both input and output registers are quantum.
\end{itemize}

\paragraph{The Adversarial Model.}
Throughout, efficient adversaries are modeled as quantum circuits with non-uniform quantum advice (i.e. quantum auxiliary input).
Formally, {\em a polynomial-size adversary} $\A = \set{\A_\secp, \rho_\secp}_{\secp\in\Nat}$, consists of a polynomial-size non-uniform sequence of quantum circuits $\{ \A_\secp \}_{\secp \in \Nat}$, and a sequence of polynomial-size mixed quantum states $\{ \rho_\secp \}_{\secp \in \Nat}$.

For an interactive quantum adversary in a classical protocol, it can be assumed without loss of generality that its output message register is always measured in the computational basis at the end of computation. This assumption is indeed without the loss of generality, because whenever a quantum state is sent through a classical channel then qubits decohere and are effectively measured in the computational basis.

\paragraph{Indistinguishability in the Quantum Setting.}

\begin{itemize}
	\item
	Let $f:\Nat \rightarrow [0, 1]$ be a function.
	\begin{itemize}
		\item
		$f$ is negligible if for every constant $c \in \Nat$ there exists $N \in \Nat$ such that for all $n > N$, $f(n) < n^{-c}$.
		
		\item
		$f$ is noticeable if there exists $c \in \Nat, N \in \Nat$ such that for every $n \geq N$, $f(n) \geq n^{-c}$.
		
		\item
		$f$ is overwhelming if it is of the form $1 - \mu(n)$, for a negligible function $\mu$.
	\end{itemize}
	
	\item
	We may consider random variables over bit strings or over quantum states. This will be clear from the context. 
	
	\item
	For two random variables $X$ and $Y$ supported on quantum states, quantum distinguisher circuit $\zkmD$ with, quantum auxiliary input $\rho$, and $\mu \in [0, 1]$, we write $X \approx_{\zkmD, \rho, \mu} Y$ if
	\begin{align*}
	\abs{
		\Pr[\zkmD(X; \rho)=1] - \Pr[\zkmD(Y; \rho)=1]
	} \leq \mu.
	\end{align*}
	
	\item
	Two ensembles of random variables $\mathcal{X}=\{X_{i}\}_{\secp\in \Nat, i \in I_\secp}$, $\mathcal{Y}=\{Y_{i}\}_{\secp\in \Nat, i \in I_\secp}$ over the same set of indices $I = \cupdot_{\secp \in \Nat}I_\secp$ are said to be {\em computationally indistinguishable}, denoted by $\mathcal{X}\approx_{c} \mathcal{Y}$, if for every polynomial-size quantum distinguisher $\zkmD=\set{\zkmD_\secp, \rho_\secp}_{\secp\in\Nat}$ there exists a negligible function $\mu(\cdot)$ such that for all $\secp \in \Nat, i \in I_\secp$,
	\begin{align*}
	X_i \approx_{\zkmD_{\secp}, \rho_\secp, \mu(\secp)} Y_i\enspace.
	\end{align*}
	
	\item The trace distance between two distributions $X, Y$ supported over quantum states, denoted $\TD(X, Y)$, is a generalization of statistical distance to the quantum setting and represents the maximal distinguishing advantage between two distributions supported over quantum states, by unbounded quantum algorithms.
	We thus say that ensembles $\mathcal{X}=\{X_{i}\}_{\secp\in \Nat, i \in I_\secp}$, $\mathcal{Y}=\{Y_{i}\}_{\secp\in \Nat, i \in I_\secp}$, supported over quantum states, are statistically indistinguishable (and write $\mathcal{X}\approx_{s} \mathcal{Y}$), if there exists a negligible function $\mu(\cdot)$ such that for all $\secp \in \Nat, i \in I_\secp$,
	\begin{align*}
	\TD\left( X_i, Y_i \right) \leq \mu(\secp) \enspace.
	\end{align*}
\end{itemize}

\medskip
In what follows, we introduce the cryptographic tools used in this work.
By default, all algorithms are classical and efficient, and security holds against polynomial-size non-uniform quantum adversaries with quantum advice.

\subsection{Cryptographic Tools}

\subsubsection{Interactive Proofs and Sigma Protocols}
We define interactive proof systems and then proceed to describe sigma protocols, which are a special case of interactive proof systems.
In what follows, we denote by $(\zkP, \zkV)$ a protocol between two parties $\zkP$ and $\zkV$.
For common input $\ins$, we denote by $\view_{\zkV}\prot{\zkP}{\zkV}(\ins)$ the output of $\zkV$ in the protocol. For honest verifiers, this output will be a single bit indicating acceptance or rejection of the proof.
Malicious quantum verifiers may have arbitrary quantum output.

\begin{definition}[Quantum Proof Systems for QMA] \label{def:proofs_quantum}
	Let $(\zkP, \zkV)$ be a quantum protocol with an honest \QPT prover $\zkP$ and an honest \QPT verifier $\zkV$ for a problem $\lang \in\QMA$, satisfying:
	\begin{enumerate}
		
		\item {\bf Statistical Completeness:}
		There is a polynomial $k(\cdot)$ and a negligible function $\mu(\cdot)$ s.t. for any $\secp \in\Nat$, $\ins \in \lang \cap \zo^\secp$, $\qwit \in \mathcal{R}_{\lang}(x)$\footnote{For a problem $\lang = (\langyes, \langno)$ in QMA, for an instance $x \in \langyes$, the set $\mathcal{R}_{\lang}(x)$ is the (possily infinite) set of quantum witnesses that make the BQP verification machine accept with some overwhelming probability $1 - \negl(\secp)$.},
		$$
		\Pr[ \view_{\zkV}\prot{\zkP(\qwit^{\otimes k(\secp)})}{\zkV}(\ins) = 1 ] \geq 1 - \mu(\secp) \enspace.
		$$
		
		\item {\bf Statistical Soundness:} There exists a negligible function $\mu(\cdot)$, such that for any (unbounded) prover $\zkmP$, any security parameter $\secp\in \Nat$, and any $\ins \in \zo^\secp\setminus\lang$,
		\begin{align*}
		\Pr\left[ \view_{\zkV}\prot{\zkmP}{\zkV}(\ins)=1 \right] \leq \mu(\secp)\enspace.
		\end{align*}
	\end{enumerate}
\end{definition}

We use the abstraction of {\em Sigma Protocols}, which are public-coin three-message proof systems with a weak zero-knowledge quarantee.
We define quantum Sigma Protocols for gap problems in QMA.

\begin{definition} [Quantum Sigma Protocol for QMA]
	A quantum sigma protocol for $\lang \in \QMA$ is a quantum proof system $(\qsigmaP, \qsigmaV)$ (as in Definition \ref{def:proofs_quantum}) with 3 messages and the following syntax.
	\begin{itemize}		
		\item
		$\alpha = \qsigmaP(\qwit^{\otimes k(\secp)} ; r):$ Given $k(\secp)$ copies of the quantum witness $w \in \rel_{\lang}(\ins)$ and classical randomness $r$, the first prover message consists of a quantum message $\alpha$ generated by a quantum unitary computation $\qsigmaP$.
		
		\item
		$\beta \gets \qsigmaV(x):$ The verifier simply outputs a string of $\poly(|x|)$ random bits.
		
		\item
		$\gamma = \qsigmaP_3(\beta, r):$ Given the verifier's $\beta$ and the randomness $r$, the prover outputs a response $\gamma$ by a classical computation $\qsigmaP_3$.
	\end{itemize}
	The protocol satisfies the following.
	
	\paragraph{Special Zero-Knowledge:} There exists a \QPT simulator $\qsigmaS$ such that,
	$$
	\set{(\alpha, \gamma) \; | \; r \gets U_{\ell(\secp)}, \alpha = \qsigmaP(\qwit^{\otimes k(\secp)} ; r), \gamma = \qsigmaP_3(\beta, r) }_{\secp, x, \qwit, \beta}
	$$
	$$
	\approx_{c}
	\set{ (\alpha, \gamma) \; | \; (\alpha, \gamma) \gets \qsigmaS(\ins,\beta) }_{\secp, x, \qwit, \beta} \enspace,
	$$
	where $\secp\in \Nat$, $\ins\in \lang\cap\zo^\secp$, $\qwit\in\rel_{\lang}(\ins)$, $\beta \in \{ 0, 1 \}^{\poly(\secp)}$ and $\ell(\secp)$ is the amount of randomness needed for the first prover message.
\end{definition}

\paragraph{Instantiations.}
Quantum sigma protocols follow from the parallel repetition of the 3-message quantum zero-knowledge protocols of \cite{broadbent2019zero} for QMA.

\subsubsection{Leveled Fully-Homomorphic Encryption with Circuit Privacy} \label{def:FHE}
We define a leveled fully-homomorphic encryption scheme with circuit privacy, that is, for an encryption $\ciph = \fheEnc(x)$ and a circuit $C$, a $C$-homomorphically-evaluated ciphertext $\evciph = \fheEval(C, \ciph)$ reveals nothing on $C$ but $C(x)$.

\begin{definition}[Circuit-Private Fully-Homomorphic Encryption]
	A circuit-private, leveled fully-homomoprhic encryption scheme $(\fheGen,$ $\fheEnc,$ $\fheEval,$ $\fheDec)$ has the following syntax:
	\begin{itemize}
		\item
		$\sk \gets \fheGen(1^\secp, 1^{s(\secp)}):$ a probabilistic algorithm that takes a security parameter $1^\secp$ and a circuit size bound $s(\secp)$ and outputs a secret key $\sk$.
		\item
		$\ciph \gets\fheEnc_{\sk}(x):$ a probabilistic algorithm that given the secret key, takes a string $x \in \zo^*$ and outputs a ciphertext $\ciph$.
		\item
		$\evciph \gets \fheEval(\Cir,\ciph):$ a probabilistic algorithm that takes a (classical) circuit $\Cir$ and a ciphertext $\ciph$ and outputs an evaluated ciphertext $\evciph$.
		\item
		$\hat{x} = \fheDec_{\sk}(\evciph):$ a deterministic algorithm that takes a ciphertext $\evciph$ and outputs a string $\hat{x}$.
	\end{itemize}
	
	\noindent The scheme satisfies the following.
	\begin{itemize}
		\item {\bf Perfect Correctness:} For any polynomial $s(\cdot)$, for any $\secp \in \Nat$, size-$s(\secp)$ classical circuit $C$ and input $x$ for $C$,
		$$
		\Pr\left[
		\fheDec_{\sk}(\evciph) = \Cir(x) \pST
		\begin{array}{l}
		\sk \gets \fheGen(1^\secp, 1^{s(\secp)}), \\
		\ciph \gets \fheEnc_{\sk}(x), \\
		\evciph \gets \fheEval(\Cir,\ciph)
		\end{array}
		\right] = 1\enspace.
		$$
		
		\item {\bf Input Privacy:} For every polynomial $\ell(\cdot)$ (and any polynomial $s(\secp)$),
		\begin{align*}
		\left\{
		\ciph \pST
		\begin{array}{l}
		\sk \gets \fheGen(1^\secp, 1^{s(\secp)}), \\
		\ciph \gets \fheEnc_{\sk}(x_0)
		\end{array}
		\right\}_{\secp, x_0, x_1}
		\approx_{c}
		\left\{
		\ciph \pST
		\begin{array}{l}
		\sk \gets \fheGen(1^\secp, 1^{s(\secp)}), \\
		\ciph \gets \fheEnc_{\sk}(x_1)
		\end{array}
		\right\}_{\secp, x_0, x_1} \enspace ,
		\end{align*}
		where $\secp \in \Nat$ and $x_0, x_1 \in \{ 0, 1 \}^{\ell(\secp)}$.
		
		\item {\bf Statistical Circuit Privacy:} There exist unbounded algorithms, probabilistic $\fheSim$ and deterministic $\fheExt$ such that:
		\begin{itemize}
			\item For every $x \in \{ 0, 1 \}^*$, $\ciph \in \fheEnc(x)$, the extractor outputs $\fheExt(\ciph) = x$.
			
			\item For any polynomial $s(\cdot)$,
			\begin{align*}
			\{
			\fheEval(\Cir, \ciph^*)
			\}_{\secp, \Cir, \ciph^*}
			\approx_s
			\{
			\fheSim( \; 1^\secp, \Cir(\fheExt(1^\secp, \ciph^*)) \; ) \}_{\secp, \Cir, \ciph^*}\enspace,
			\end{align*}
			where $\secp\in\Nat$, $\Cir$ is a $s(\secp)$-size circuit, and $\ciph^* \in \{ 0, 1 \}^{*}$.
		\end{itemize}
		
	\end{itemize}
\end{definition}

\noindent The next claim follows directly from the circuit privacy property, and will be used throughout the analysis.
\begin{claim}[Evaluations of Agreeing Circuits are Statistically Close] \label{claim:SFE_eval}
	For any polynomial $s(\cdot)$,
	$$
	\{ \sfeEval(C_{0}, \ciph^*) \}_{\secp, C_0, C_1, \ciph}
	\approx_s
	\{ \sfeEval(C_{1}, \ciph^*) \}_{\secp, C_0, C_1, \ciph} \enspace ,
	$$
	where $\secp \in \Nat$, $C_0$, $C_1$ are two $s(\secp)$-size functionally-equivalent circuits, and $\ciph^* \in \{ 0, 1 \}^*$.
\end{claim}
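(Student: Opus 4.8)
The plan is to derive the claim directly from the Statistical Circuit Privacy property, using the simulator $\fheSim$ and the (total, deterministic) extractor $\fheExt$ as a common reference point for both $C_0$ and $C_1$. The key observation is that the circuit privacy guarantee is quantified over \emph{all} strings $\ciph^* \in \zo^*$ — not merely honestly generated ciphertexts — and that $\fheExt$ is defined on every such string; this is exactly what lets us run the argument for an arbitrary $\ciph^*$.

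Concretely, I would fix a polynomial $s(\cdot)$, a security parameter $\secp \in \Nat$, a pair of functionally-equivalent size-$s(\secp)$ circuits $C_0, C_1$, and an arbitrary string $\ciph^*$, and set $x^* := \fheExt(1^\secp, \ciph^*)$. Since both $C_0$ and $C_1$ have size $s(\secp)$, Statistical Circuit Privacy applied to each of them provides a single negligible function $\mu(\cdot)$ — uniform over all indices of that ensemble, hence over all indices of the claim's ensemble — with
$$
\TD\big( \fheEval(C_b, \ciph^*), \; \fheSim(1^\secp, C_b(x^*)) \big) \le \mu(\secp) \quad \text{for } b \in \zo.
$$
Because $C_0$ and $C_1$ are functionally equivalent we have $C_0(x^*) = C_1(x^*)$, so $\fheSim(1^\secp, C_0(x^*))$ and $\fheSim(1^\secp, C_1(x^*))$ are the very same distribution. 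Chaining the two bounds through this common midpoint via the triangle inequality for trace distance yields $\TD\big(\fheEval(C_0, \ciph^*), \fheEval(C_1, \ciph^*)\big) \le 2\mu(\secp)$ for every index, which is precisely the desired $\approx_s$.

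I do not expect a genuine obstacle here: the statement is essentially the textbook consequence of circuit privacy. The only points needing a little care are bookkeeping ones — checking that a single negligible function can be used across the whole index set $\{\secp, C_0, C_1, \ciph^*\}$ (which holds because the circuit-privacy ensemble is already quantified over all size-$s(\secp)$ circuits and all strings), verifying that the two invocations legitimately share the same circuit-size bound $s$, and noting that the middle step is an \emph{equality} of distributions rather than merely statistical closeness, so that no spurious third error term appears. If one wanted to be maximally careful, one could run the whole argument at the level of ensembles from the start, applying the circuit-privacy $\approx_s$ statement with $C := C_0$ and then with $C := C_1$ and invoking transitivity of $\approx_s$ up to a factor of two.
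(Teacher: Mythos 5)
Your argument is correct and is exactly the reasoning the paper intends: the claim is stated as following directly from statistical circuit privacy, via the deterministic extractor giving a common $x^* = \fheExt(1^\secp,\ciph^*)$, functional equivalence forcing $C_0(x^*)=C_1(x^*)$ so the two simulator distributions coincide, and the triangle inequality for trace distance. No gaps.
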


\paragraph{Instantiations.} Circuit-private leveled FHE schemes are known based on LWE \cite{ostrovsky2014maliciously, brakerski2018two}.

\subsubsection{Pseudorandom-key Public-key Encryption} \label{def:PKE}
We define a public-key encryption scheme with pseudorandom public keys.

\begin{definition}[Pseudorandom-key Public-key Encryption]
	A pseudorandom-key public-key encryption scheme $(\pkeGen,$ $\pkeEnc,$ $\pkeDec)$ has the following syntax:
	\begin{itemize}
		\item
		$(\pk, \sk) \gets \pkeGen(1^\secp):$ a probabilistic algorithm that takes a security parameter $1^\secp$ and outputs a pair of public and secret keys $(\pk, \sk)$.
		\item
		$\ciph \gets\pkeEnc_{\pk}(x):$ a probabilistic algorithm that given the public key, takes a string $x \in \zo^*$ and outputs a ciphertext $\ciph$.
		\item
		$x = \fheDec_{\sk}(\ciph):$ a deterministic algorithm that given the secret key, takes a ciphertext $\ciph$ and outputs a string $x$.
	\end{itemize}
	
	\noindent The scheme satisfies the following.
	\begin{itemize}
		\item {\bf Statistical Correctness Against Malicious Encryptors:} There is a negligible function $\negl(\cdot)$ such that for any $\secp \in \Nat$ and input $x \in \{ 0, 1 \}^*$, the following perfect correctness holds with probability at least $1 - \negl(\secp)$ over sampling $(\pk, \sk) \gets \pkeGen(1^\secp)$:
		$$
		\Pr\left[
		\pkeDec_{\sk}(\ciph) = x \pST
		\ciph \gets \pkeEnc_{\pk}(x)
		\right] = 1 \enspace.
		$$
		
		\item {\bf Public-key Pseudorandomness:} For $\secp \in \Nat$ let $\ell(\secp)$ be the length of the public key generated by $\pkeGen(1^\secp)$, then,
		\begin{align*}
		\left\{
		\pk \pST
		(\pk, \sk) \gets \pkeGen(1^\secp)
		\right\}_{\secp \in \Nat}
		\approx_{c}
		\left\{
		U_{\ell(\secp)}
		\right\}_{\secp \in \Nat}
		\enspace .
		\end{align*}
		
		\item {\bf Encryption Security:} For every polynomial $l(\cdot)$,
		\begin{align*}
		\left\{
		(\pk, \ciph) \pST
		\begin{array}{l}
		(\pk, \sk) \gets \pkeGen(1^\secp), \\
		\ciph \gets \pkeEnc_{\pk}(x_0)
		\end{array}
		\right\}_{\secp, x_0, x_1}
		\approx_{c}
		\left\{
		(\pk, \ciph) \pST
		\begin{array}{l}
		(\pk, \sk) \gets \pkeGen(1^\secp), \\
		\ciph \gets \pkeEnc_{\pk}(x_1)
		\end{array}
		\right\}_{\secp, x_0, x_1} \enspace ,
		\end{align*}
		where $\secp \in \Nat$ and $x_0, x_1 \in \{ 0, 1 \}^{l(\secp)}$.
	\end{itemize}
\end{definition}

\paragraph{Instantiations.} Pseudorandom-key public-key encryption schemes are known based on LWE \cite{Regev09}.

\subsubsection{Pseudorandom Function} \label{def:PRF}
\begin{definition} [Pseudorandom Function (PRF)] \label{def:qprf}
	A pseudorandom function scheme $(\prfGen,$ $\prfF)$ has the following syntax:
	\begin{itemize}
		\item
		$\sk \gets \prfGen(1^\secp, 1^{\ell(\secp)}):$ a probabilistic algorithm that takes a security parameter $1^\secp$ and an output size $\ell(\secp)$ and outputs a secret key $\sk$.
		\item
		$y = \prfF_{\sk}(x):$ a deterministic algorithm that given the secret key, takes a string $x \in \zo^*$ and outputs a string $y \in \{ 0, 1 \}^{\ell(\secp)}$.
	\end{itemize}
	
	\noindent The scheme satisfies the following property.
	
	\begin{itemize}
		\item {\bf Pseudorandomness:} For every quantum polynomial-size distinguisher $\zkmD = \{ \zkmD_\secp , \rho_\secp \}_{\secp \in \Nat}$ and polynomial $\ell(\cdot)$ there is a negligible function $\mu(\cdot)$ such that for all $\secp \in \Nat$,
		$$
		\abs{
			\Pr_{\sk \gets \prfGen(1^\secp, 1^{\ell(\secp)})}
			[\zkmD_\secp(\rho_\secp)^{\prfF_{\sk}(\cdot)} = 1] -
			\Pr_{f \gets (\{0, 1\}^{\ell(\secp)})^{(\{0, 1\}^*)}}
			[\zkmD_\secp(\rho_\secp)^{f(\cdot)} = 1]
		} \leq \mu(\secp) \enspace .
		$$
	\end{itemize}
\end{definition}

\subsubsection{NIZK Argument for NP in the Common Random String Model} \label{def:NIZK}
We define non-interactive computational zero-knowledge arguments for NP in the common random string model, with adaptive multi-theorem security.

\begin{definition}[NICZK Argument for NP] \label{def:nizk_np}
	A non-interactive computational zero-knowledge argument system in the common random string model for a language $\lang \in \NP$ consists of 3 algorithms $(\nizkSet$ $, \nizkP$ $, \nizkV)$ with the following syntax:
	\begin{itemize}
		\item
		$\crs \gets \nizkSet(1^\secp) : $
		A classical algorithm that on input security parameter $\secp$ simply samples a common uniformly random string $\crs$.
		
		\item
		$\pi \gets \nizkP(\crs , x , \wit) : $
		A probabilistic algorithm that on input $\crs$, an instance $x \in \lang$ and a witness $\wit \in \mathcal{R}_{\lang}(\ins)$, outputs a proof $\pi$.
		
		\item
		$\nizkV(\crs, x, \pi) \in \{ 0 , 1 \} : $
		A deterministic algorithm that on input $\crs$, an instance $x \in \lang$ and a proof $\pi$, outputs a bit.
	\end{itemize}
	The protocol satisfies the following properties.
	\begin{itemize}
		\item {\bf Perfect Completeness:}
		For any $\secp \in\Nat$, $\ins \in \lang \cap \zo^\secp$, $\wit \in \mathcal{R}_{\lang}(x)$,
		$$
		\Pr_{\substack{
				\crs \gets \nizkSet(1^\secp),\\
				\pi \gets \nizkP(\crs , x , \wit)
		}}
		\Big[ \nizkV(\crs, x, \pi) = 1 \Big] = 1 \enspace.
		$$
		
		\item {\bf Adaptive Computational Soundness:}
		For every quantum polynomial-size prover $\nizkP^* = \{ \nizkP^*_\secp , \rho_\secp \}_{\secp \in \Nat}$ there is a negligible function $\mu(\cdot)$ such that for every security parameter $\secp \in \Nat$,
		$$
		\Pr_{\substack{
				\crs \gets \nizkSet(1^\secp),\\
				(x, \pi^*) \gets \nizkP^*_\secp(\rho_\secp, \crs)
		}}
		\Big[
		(x \notin \lang) \land \big( 1 = \nizkV(\crs, x, \pi^*) \big)
		\Big] \leq \mu(\secp) \enspace .
		$$
		
		\item {\bf Multi-Theorem Adaptive Computational Zero Knowledge:}
		There exists a polynomial-time simulator $\nizkSim$ such that for every quantum polynomial-size distinguisher $\Disting = \{ \Disting_\secp , \rho_\secp \}_{\secp \in \Nat}$ there is a negligible function $\mu(\cdot)$ such that for every security parameter $\secp \in \Nat$,
		$$
		\abs{P_{\secp, \real} - P_{\secp, \simulated}} \leq \mu(\secp) \enspace ,
		$$
		where,
		$$
		P_{\secp, \real} :=
		\Pr_{\crs \gets \nizkSet(1^\secp)}
		\Big[
		\Disting_{\secp}(\rho_{\secp}, \crs)^{\nizkP(\crs, \cdot, \cdot)} = 1
		\Big] \enspace ,
		$$
		$$
		P_{\secp, \simulated} :=
		\Pr_{(\zkcrs, \td) \gets \nizkSim(1^\secp)}
		\Big[
		\Disting_{\secp}(\rho_{\secp}, \zkcrs)^{\nizkSim(\td, \cdot)} = 1
		\Big] \enspace ,
		$$
		where,
		\begin{itemize}
			\item
			In every query that $\Disting$ makes to the oracle, it sends a pair $(x, \wit)$ where $x \in \lang\cap \{0, 1\}^\secp$ and $\wit \in \rel_{\lang}(x)$.
			
			\item
			$\nizkP(\crs, \cdot, \cdot)$ is the prover algorithm and $\nizkSim(\cdot, \cdot)$ acts only on its sampled trapdoor $\td$ and on $x$.
		\end{itemize}
		
	\end{itemize}
	
\end{definition}

\paragraph{Instantiations.} Non-interactive computational zero-knowledge arguments for NP in the common random string model with both adaptive soundness and zero knowledge are known based on LWE \cite{canetti2019fiat, peikert2019noninteractive}.

\subsubsection{Malicious Designated-Verifier Non-interactive Zero-knowledge for QMA}
We define non-interactive zero-knowledge protocols in the malicious designated-verifier model (MDV-NIZK) for QMA, with adaptive (and non-adaptive) multi-theorem security.

\begin{definition}[MDV-NICZK Argument for QMA] \label{def:mdv_nizk_qma}
	A non-interactive computational zero-knowledge argument system for in the malicious designated-verifier model for a gap problem $(\langyes, \langno) = \lang \in \QMA$ consists of 4 algorithms $(\zkSet$ $, \zkVSet$ $, \zkP$ $, \zkV)$ with the following syntax:
	\begin{itemize}
		\item
		$\crs \gets \zkSet(1^\secp) : $
		A classical algorithm that on input security parameter $\secp$ simply samples a common uniformly random string $\crs$.
		
		\item
		$(\pvk , \svk) \gets \zkVSet(\crs) : $
		A classical algorithm that on input $\crs$ samples a pair of public and secret verification keys.
		
		\item
		$\ket{\pi} \gets \zkP(\crs , \pvk , x , \qwit^{\otimes k(\secp)}) : $
		A quantum algorithm that on input $\crs$, the public verification key $\pvk$, an instance $x \in \langyes$ and polynomially-many identical copies of a witness $\qwit \in \mathcal{R}_{\lang}(\ins)$ ($k(\cdot)$ is some polynomial), outputs a quantum state $\ket{\pi}$.
		
		\item
		$\zkV(\crs, \svk, x, \ket{\pi}) \in \{ 0 , 1 \} : $
		A quantum algorithm that on input $\crs$, secret verification key $\svk$, an instance $x \in \lang$ and a quantum proof $\ket{\pi}$, outputs a bit.
	\end{itemize}
	The protocol satisfies the following properties.
	\begin{itemize}
		\item {\bf Statistical Completeness:}
		There is a polynomial $k(\cdot)$ and a negligible function $\mu(\cdot)$ s.t. for any $\secp \in\Nat$, $\ins \in \langyes \cap \zo^\secp$, $\qwit \in \mathcal{R}_{\lang}(x)$, $\crs \in \zkSet(1^\secp)$, $(\pvk, \svk) \in \zkVSet(\crs)$,
		$$
		\Pr_{\ket{\pi} \gets \zkP(\crs , \pvk , x , \qwit^{\otimes k(\secp)})}
		\Big[ \zkV(\crs, \svk, x, \ket{\pi}) = 1 \Big] \geq 1 - \mu(\secp) \enspace.
		$$
		
		\item {\bf Multi-Theorem Adaptive Computational Soundness:}
		For every quantum polynomial-size prover $\zkmP = \{ \zkmP_\secp , \rho_\secp \}_{\secp \in \Nat}$ there is a negligible function $\mu(\cdot)$ such that for every security parameter $\secp \in \Nat$,
		$$
		\Pr_{\substack{
				\crs \gets \zkSet(1^\secp),\\
				(\pvk, \svk) \gets \zkVSet(\crs),\\
				(x, \ket{\pi^*}) \gets \zkmP_\secp(\rho_\secp, \crs, \pvk)^{\zkV(\crs, \svk, \cdot, \cdot)}
			}}
		\Big[
		(x \in \lang_{no}) \land \big( 1 = \zkV(\crs, \svk, x, \ket{\pi^*}) \big)
		\Big] \leq \mu(\secp) \enspace .
		$$
		
		\item {\bf Multi-Theorem Adaptive Computational Zero Knowledge:}
		There exists a quantum polynomial-time simulator $\zkSim$ such that for every quantum polynomial-size distinguisher $\Disting = \{ \Disting_\secp , \rho_\secp \}_{\secp \in \Nat}$ there is a negligible function $\mu(\cdot)$ such that for every security parameter $\secp \in \Nat$,
		$$
		\abs{
			\Pr_{\crs \gets \zkSet(1^\secp)}
			\Big[
			\Disting_{\secp}(\rho_{\secp}, \crs)^{\zkP(\crs, \cdot, \cdot, \cdot)} = 1
			\Big]
			-  
			\Pr_{(\zkcrs, \td) \gets \zkSim(1^\secp)}
			\Big[
			\Disting_{\secp}(\rho_{\secp}, \zkcrs)^{\zkSim(\td, \cdot, \cdot)} = 1
			\Big]
		}
		\leq \mu(\secp) \enspace ,
		$$
		where,
		\begin{itemize}
			\item
			In every query that $\Disting$ makes to the oracle, it sends a triplet $(\mpvk, x, \qwit^{\otimes k(\secp)})$ where $\mpvk$ can be arbitrary, $x \in \langyes\cap \{0, 1\}^\secp$ and $\qwit \in \rel_{\lang}(x)$.
			
			\item
			$\zkP(\crs, \cdot, \cdot, \cdot)$ is the prover algorithm and $\zkSim(\cdot, \cdot)$ acts only on its sampled trapdoor $\td$ and on $\mpvk, x$.
		\end{itemize}
		
	\end{itemize}

\end{definition}

	We note that the standard (non-adaptive) soundness guarantees the following: 
	\begin{definition}[MDV-NICZK Argument for QMA with Standard Soundness] \label{def:mdv_nizk_qma_sound}
		A non-interactive computational zero-knowledge argument system in the malicious designated-verifier model for a gap problem $(\langyes, \langno) = \lang \in \QMA$ has standard non-adaptive soundness if it satisfies the same properties described in definition \ref{def:mdv_nizk_qma}, with the only change that instead of satisfying multi-theorem adaptive soundness, it satisfies the following guarantee:
		\begin{itemize}
			\item {\bf Multi-Theorem Computational Soundness:}
			For every quantum polynomial-size prover $\zkmP = \{ \zkmP_\secp , \rho_\secp \}_{\secp \in \Nat}$ and $\{ x_\secp \}_{\secp \in \Nat}$ where $\forall \secp \in \Nat: x_\secp \in \langno$, there is a negligible function $\mu(\cdot)$ such that for every security parameter $\secp \in \Nat$,
			$$
			\Pr_{\substack{
					\crs \gets \zkSet(1^\secp),\\
					(\pvk, \svk) \gets \zkVSet(\crs),\\
					\ket{\pi^*} \gets \zkmP_\secp(\rho_\secp, \crs, \pvk)^{\zkV(\crs, \svk, \cdot, \cdot)}
			}}
			\Big[
			1 = \zkV(\crs, \svk, x, \ket{\pi^*})
			\Big] \leq \mu(\secp) \enspace .
			$$
		\end{itemize}
		
	\end{definition}

\section{Non-interactive Zero-knowledge Protocol}
In this section we describe a non-interactive computational zero-knowledge argument system in the malicious designated-verifier model for an arbitrary $\lang \in \QMA$, according to Definition \ref{def:mdv_nizk_qma}.

	\paragraph{Ingredients and notation:}
	\begin{itemize}
		\item
		A non-interactive zero-knowledge argument for NP $(\nizkSet,$ $\nizkP,$ $\nizkV)$ in the common random string model.
		\item
		A pseudorandom function $(\prfGen, \prfF)$.
		\item
		A leveled fully-homomorphic encryption scheme $(\fheGen,$ $\fheEnc,$ $\fheEval,$ $\fheDec)$ with circuit privacy.
		\item
		A public-key encryption scheme $(\pkeGen, \pkeEnc, \pkeDec)$ with pseudorandom public keys.
		\item
		A 3-message quantum sigma protocol $(\qsigmaP, \qsigmaV)$ for QMA.
	\end{itemize}
	
	\noindent We describe the protocol in \figref{fig:nizk_qma}.
	
	\protocol
	{\proref{fig:nizk_qma}}
	{A non-interactive computational zero-knowledge argument system for $\lang \in \QMA$ in the malicious designated-verifier model.}
	{fig:nizk_qma}
	{
		\begin{description}
			\item[Common Input:] An instance $\ins \in \langyes\cap \zo^\secp$, for security parameter $\secp \in \Nat$.
			
			\item[$\zkP$'s private input:] Polynomially many identical copies of a witness for $\ins$: $\ket{w}^{\otimes k(\secp)}$ s.t. $\ket{w} \in \mathcal{R}_{\lang}(\ins)$.
		\end{description}
		
		\begin{enumerate}
			\item {\bf Common Random String:}
			$\zkSet$ samples the common random string of the NP NIZK argument, $\crs \gets \nizkSet(1^\secp)$ and an additional random string $\ek \gets U_{\ell(\secp)}$ where $\ell(\secp)$ is the size of a public key generated by $\pkeGen(1^{\secp})$.
			$\zkSet$ publishes $(\crs , \ek)$ as the common random string.
			
			\item {\bf Public and Secret Verification Keys:}
			$\zkVSet$ samples public and secret verification keys:
			\begin{itemize}
				\item
				Samples $\prfk \gets \prfGen(1^\secp)$, $\fhek \gets \fheGen(1^\secp)$ and encrypts the PRF key using the FHE encryption, $\ciph_{\zkV} \gets \fheEnc_{\fhek}(\prfk)$.
				
				\item
				Let $r_\zkV$ be the randomness used for $\prfGen, \fheGen, \fheEnc$. $\zkVSet$ encrypts $\ciph_{r_\zkV} \gets \pkeEnc_{\ek}(r_\zkV)$ and computes a NIZK proof $\pi_{\zkV} \gets \nizkP(\crs, (\ciph_{\zkV}, \ciph_{r_\zkV}, \ek))$, for the NP statement declaring that the tuple $(\ciph_{\zkV}, \ciph_{r_\zkV}, \ek)$ is consistent.\footnote{Formally, there exist $r_1, r_2$ s.t. $\ciph_{\zkV}$ is generated by using $\prfGen, \fheGen, \fheEnc$ with randomness $r_1$, and $\ciph_{r_\zkV} = \pkeEnc_{\ek}(r_1 ; r_2)$.}		
				
			\end{itemize}		
			The key values are: $\pvk = (\ciph_{\zkV}, \ciph_{r_\zkV}, \pi_\zkV)$, $\svk = (\prfk, \fhek)$.
			
			\item {\bf Non-interactive Zero-knowledge Proof:}
				Given $(\crs, \ek)$ and $\pvk$, $\zkP$ first checks that $1 = \nizkV(\crs, (\ciph_{\zkV}, \ciph_{r_\zkV}, \ek), \pi_\zkV)$ and aborts otherwise.
			\begin{itemize}
				\item
				$\zkP$ computes the sigma protocol message $\alpha = \qsigmaP(\ket{w}^{\otimes k(\secp)} ; r_\Xi)$, for randomness $r_\Xi$.
				
				\item
				$\zkP$ computes $\evciph_\zkP \gets \fheEval(C_{x, r_\Xi} , \ciph_{\zkV})$, where $C_{x, r_\Xi}$ is the following circuit:
				Given input $\prfk$ a PRF secret key, $C_{x, r_\Xi}$ computes $\beta_x = \prfF_{\prfk}(x)$, and then outputs $\gamma = \qsigmaP_3(\beta_x, r_\Xi)$.
				
				\item
				$\zkP$ encrypts $\ciph_{r_\Xi} \gets \pkeEnc_{\ek}(r_\Xi)$ and computes a NIZK proof $\pi_{\zkP} \gets \nizkP(\crs, (\evciph_{\zkP}, \ciph_{r_\Xi}, \ek))$, for the NP statement declaring that the tuple $(\evciph_{\zkP}, \ciph_{r_\Xi}, \ek)$ is consistent.\footnote{Formally, there exist $r_\Xi, r_1, r_2$ s.t. $\evciph_{\zkP} = \fheEval(C_{x, r_\Xi} , \ciph_{\zkV} ; r_1)$, $\ciph_{r_\Xi} = \pkeEnc_{\ek}(r_\Xi ; r_2)$.}
			\end{itemize}
			$\zkP$ sends $\ket{\pi} = (\alpha, \evciph_{\zkP}, \ciph_{r_\Xi}, \pi_\zkP)$ to $\zkV$.
			
			\item {\bf Verification:} \label{nizk:ver}
			Given $(\crs, \ek)$, $\svk$ and $\ket{\pi}$, $\zkV$ accepts iff all of the following holds:
			\begin{itemize}
				\item
				$1 = \nizkV(\crs, (\evciph_\zkP, \ciph_{r_\Xi}, \ek), \pi_\zkP)$.
				
				\item
				Let $\beta_x = \prfF_{\prfk}(x)$, $\gamma = \fheDec_{\fhek}(\evciph_{\zkP})$, then $1 = \qsigmaV(x, \alpha, \beta_x, \gamma)$.
			\end{itemize}
			
		\end{enumerate}
	}

	The (statistical) completeness of the protocol follows readily from the perfect completeness of the $\nizk$ scheme, the perfect correctness of $\fhe$ and the statistical completeness of the quantum sigma protocol $(\qsigmaP, \qsigmaV)$.
	We next prove the soundness and zero knowledge of the protocol.

	\subsection{Soundness}
	We prove that the protocol has multi-theorem computational soundness (as in Definition \ref{def:mdv_nizk_qma_sound}).
	By standard generic compilation and sub-exponential hardness of LWE we extend our soundness to be adaptive (as in Definition \ref{def:mdv_nizk_qma}).
	
	\begin{proposition} [The Protocol has Multi-theorem Computational Soundness] \label{prop:soundness}
		For every quantum polynomial-size prover $\zkmP = \{ \zkmP_\secp , \rho_\secp \}_{\secp \in \Nat}$ there is a negligible function $\mu(\cdot)$ such that for every security parameter $\secp \in \Nat$ and $x \in \langno \cap \{ 0, 1 \}^\secp$,
		$$
		\Pr_{\substack{
				(\crs, \ek) \gets \zkSet(1^\secp),\\
				\big( (\ciph_{\zkV}, \ciph_{r_\zkV}, \pi_\zkV), (\prfk, \fhek) \big) \gets \zkVSet(\crs, \ek),\\
				\ket{\pi^*} \gets \zkmP_\secp\big( \rho_\secp, (\crs, \ek), (\ciph_{\zkV}, \ciph_{r_\zkV}, \pi_\zkV) \big)^{\zkV((\crs, \ek), (\prfk, \fhek), \cdot, \cdot)}
		}}
		\Big[
		1 = \zkV((\crs, \ek), (\prfk, \fhek), x, \ket{\pi^*})
		\Big] \leq \mu(\secp) \enspace .
		$$
	\end{proposition}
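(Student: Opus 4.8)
The plan is to proceed by a sequence of hybrid games, starting from the real soundness experiment and ending in a game where a successful malicious prover directly contradicts the statistical soundness of the quantum sigma protocol $(\qsigmaP, \qsigmaV)$. Throughout, I fix an arbitrary polynomial-size prover $\zkmP$ and an arbitrary sequence $\{x_\secp \in \langno\}$, and I track the probability $p_\secp$ that the verifier accepts. Since the adversary has oracle access to $\zkV((\crs,\ek),(\prfk,\fhek),\cdot,\cdot)$, I will first argue that this oracle is almost useless to the prover, and then argue that a single accepting proof yields a sigma-protocol cheat.

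The first block of hybrids replaces the honestly sampled extraction string $\ek \gets U_{\ell(\secp)}$ with a real PKE public key $\ek \gets \pkeGen(1^\secp)$ (indistinguishable by public-key pseudorandomness), so that we can now decrypt the prover's ciphertext $\ciph_{r_\Xi}$ in any accepting transcript it produces and extract a candidate string $r_\Xi$. Using adaptive soundness of the NP NIZK applied to the statement proved by $\pi_\zkP$, with overwhelming probability the extracted $r_\Xi$ is genuinely consistent: $\evciph_{\zkP} = \fheEval(C_{x,r_\Xi}, \ciph_\zkV; r_1)$ for some $r_1$. Then I invoke $\fheExt$ and statistical circuit privacy (Claim~\ref{claim:SFE_eval}) together with perfect correctness of the FHE to conclude that $\fheDec_{\fhek}(\evciph_\zkP) = C_{x,r_\Xi}(\prfk) = \gamma$ where $\gamma = \qsigmaP_3(\beta_x, r_\Xi)$ and $\beta_x = \prfF_{\prfk}(x)$ --- in other words, the $\gamma$ accepted by the verifier is exactly the one a sigma-protocol prover would compute from randomness $r_\Xi$ against challenge $\beta_x$. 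The next hybrids then (i) replace $\ciph_\zkV = \fheEnc_{\fhek}(\prfk)$ by $\fheEnc_{\fhek}(0)$, using input privacy of the FHE --- this is where I must be careful, because the verifier's oracle uses $\fhek$ directly in decryption, so I need the reduction to the FHE game to internally simulate the oracle; since the verifier only needs $\prfk$ and $\fhek$, and $\prfk$ is still available, this goes through --- and then (ii) replace the verifier's use of $\beta_x = \prfF_{\prfk}(x)$ by $\beta_x = f(x)$ for a truly random function $f$, using PRF security, where again the reduction answers verifier-oracle queries using its own PRF/random-function oracle. After these swaps, $\pvk$ carries no information about which $\beta_x$ the verifier will use on the challenge instance $x$.

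At this point I build a cheating sigma-protocol prover $\qsigmaV$-adversary: it receives a uniformly random $\beta$ (externally), plays the role of $\zkVSet$ and the verifier oracle for $\zkmP$ while hard-wiring the response on instance $x$ to use $\beta$, and when $\zkmP$ outputs an accepting $\ket{\pi^*}=(\alpha,\evciph_\zkP,\ciph_{r_\Xi},\pi_\zkP)$, it outputs $(\alpha, \gamma)$ where $\gamma$ is extracted as above; by the preceding hybrids, $\qsigmaV(x,\alpha,\beta,\gamma)=1$ with probability $p_\secp - \negl(\secp)$. Since $x \in \langno$, statistical soundness of the quantum sigma protocol forces $p_\secp = \negl(\secp)$.

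\textbf{Main obstacle.} I expect the delicate point to be handling the verification \emph{oracle} in the reductions: every hybrid step that changes something inside $\svk = (\prfk,\fhek)$ (the FHE-encrypt-zero step, and implicitly the $\ek$ swap) must be carried out by a reduction that still perfectly simulates $\zkV$ on arbitrary adversarial proofs and arbitrary instances, using only what the external challenger provides. For the FHE input-privacy step this requires noting that the verdict function needs $\fhek$ only to decrypt $\evciph_\zkP$, which the FHE challenger can do, and needs $\prfk$ which the reduction holds in the clear; for the PRF step the reduction forwards verifier queries on instance $x'$ to its own function oracle at point $x'$ (using the same point the honest verifier would), which is consistent because the adversary gets at most polynomially many queries. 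A secondary subtlety is that the extraction argument (NIZK soundness $+$ FHE circuit privacy $+$ FHE correctness) must be applied to the \emph{same} accepting transcript that the verifier accepts, so I will phrase it as: condition on the acceptance event, and bound by $\negl(\secp)$ the probability that extraction fails or yields an inconsistent $r_\Xi$.
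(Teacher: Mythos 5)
Your overall strategy matches the paper's proof in outline (swap $\ek$ for a real PKE key, extract $r_\Xi$ via NIZK soundness and PKE/FHE correctness, apply FHE input privacy and PRF security, then reduce to soundness of the quantum sigma protocol), but as written it has two genuine gaps.

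First, you cannot invoke FHE input privacy (to replace $\fheEnc_{\fhek}(\prfk)$ by an encryption of zeros) or PRF security while the public verification key still contains $\ciph_{r_\zkV}=\pkeEnc_{\ek}(r_\zkV)$ together with an honestly generated $\pi_\zkV$: the string $r_\zkV$ is the randomness of $\prfGen,\fheGen,\fheEnc$, so a reduction to the FHE (resp.\ PRF) security game cannot form $\ciph_{r_\zkV}$, nor a valid witness for $\pi_\zkV$, because it does not know the challenger's key-generation and encryption randomness (resp.\ the PRF key). The paper inserts two hybrids before these steps --- simulating $\pi_\zkV$ with the NP NIZK simulator, and then replacing $\ciph_{r_\zkV}$ by an encryption of zeros using PKE security --- precisely so that the subsequent reductions are well-defined. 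Your proposal omits both, and without them the ``(i)'' and ``(ii)'' swaps you describe cannot be carried out.

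Second, and more centrally, your final reduction does not handle oracle queries on no-instances made \emph{before} the prover's final output, which is exactly the multi-theorem difficulty this proposition is about. In the sigma-protocol soundness game the reduction must commit to $\alpha$ before it receives the external challenge $\beta$; but to answer an earlier verdict-oracle query on the target $x$ (or on any other no-instance) it would already need $\beta_x$, so ``hard-wiring the response on instance $x$ to use $\beta$'' is circular --- the reduction does not yet know which $\alpha$ it will forward, and it has no $\beta$ with which to evaluate $\qsigmaV$ on those queries. The paper resolves this with an additional (inefficient) hybrid: it identifies the \emph{first} time step $t$ at which the prover produces an accepting proof for a no-instance with noticeable probability, answers all earlier no-instance queries with $0$ (indistinguishable by the minimality of $t$), fixes a snapshot at step $t$ by averaging, and only then embeds the external challenge. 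Some version of this argument is necessary; without it your reduction cannot simulate the verdict oracle consistently.
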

	
	\begin{proof}
		Let $\zkmP = \{ \zkmP_\secp, \rho_\secp \}_{\secp \in \Nat}$ a polynomial-size quantum prover and let $\{ x_\secp \}_{\secp \in \Nat}$ s.t. $\forall \secp \in \Nat : x_\secp \in \langno \cap \{ 0, 1 \}^{\secp}$.
		We prove soundness by a hybrid argument, that is, we consider a series of computationally-indistinguishable hybrid processes with output over $\{ 0, 1 \}$, starting from the output of the verifier (for the prover's false proof) in the real inteaction, until we get to a distribution where the output of the verifier can be $1$ with at most negligible probability.
		We define the following processes.
		\begin{itemize}
			\item $\Hyb_0:$
			The output distribution of the verifier in the real interaction, that is, for
			$$
			(\crs, \ek) \gets \zkSet(1^\secp) \enspace ,
			\big( (\ciph_{\zkV}, \ciph_{r_\zkV}, \pi_\zkV), (\prfk, \fhek) \big) \gets \zkVSet(\crs, \ek) \enspace , 
			$$
			$$
			\ket{\pi^*} \gets \zkmP_\secp\big( \rho_\secp, (\crs, \ek), (\ciph_{\zkV}, \ciph_{r_\zkV}, \pi_\zkV) \big)^{\zkV((\crs, \ek), (\prfk, \fhek), \cdot, \cdot)} \enspace ,
			$$
			the output bit $\zkV((\crs, \ek), (\prfk, \fhek), x, \ket{\pi^*})$.
			
			\item $\Hyb_1:$
			This hybrid process is identical to $\Hyb_0$, with the exception that $\ek$ is sampled as a public key for the PKE scheme $(\ek, \sk) \gets \pkeGen(1^\secp)$, rather than as a random string of the same length.
			To move to this hybrid we will use the fact that the public keys of the PKE scheme are pseudorandom.
			
			\item $\Hyb_2:$
			This hybrid process is identical to $\Hyb_1$, with the exception that the verification algorithm (described in step \ref{nizk:ver} of the protocol) changes.
			The new verifier $\tilde{\zkV}$ still makes sure that $\pi_\zkP$ is a valid proof for $(\evciph_{\zkP}, \ciph_{r_\Xi}, \ek)$, but the second check changes to the following:
			Let $r_{\Xi} = \pkeDec_{\sk}(\ciph_{r_\Xi})$, and let $\gamma = \qsigmaP_3(\beta_x, r_{\Xi})$. Then $\tilde{\zkV}$ accepts if $1 = \qsigmaV(x, \alpha, \beta_x, \gamma)$.
			To move to this hybrid we will use the (adaptive) soundness property of the NP NIZK proof that $\zkmP$ provides.
			
			\item $\Hyb_3:$
			This hybrid process is identical to $\Hyb_2$, with the exception that when generating the CRS $(\crs, \ek)$ and the public verification key $\pvk = (\ciph_{\zkV}, \ciph_{r_\zkV}, \pi_\zkV)$, (1) the CRS for the NP NIZK is simulated $(\crs, \td) \gets \nizkSim(1^\secp)$, (2) the proof $\pi_\zkV$ is simulated $\pi_\zkV \gets \nizkSim( \td, (\ciph_{\zkV}, \ciph_{r_\zkV}, \ek) )$ rather than generated by the NP NIZK prover.
			To move to this hybrid we use the zero-knowledge property of the NP NIZK proof that $\zkV$ provides.
			
			\item $\Hyb_4:$
			This hybrid process is identical to $\Hyb_3$, with the exception that when generating $\pvk = (\ciph_{\zkV}, \ciph_{r_\zkV}, \pi_\zkV)$, $\ciph_{r_\zkV}$ is just an encryption of a string of zeros (of the same length) rather than the randomness $r_\zkV$.
			To move to this hybrid we use the security of the PKE scheme.
			
			\item $\Hyb_5:$
			This hybrid process is identical to $\Hyb_4$, with the exception that when generating $\pvk = (\ciph_{\zkV}, \ciph_{r_\zkV}, \pi_\zkV)$, $\ciph_{\zkV}$ is just an encryption of a string of zeros (of the same length) rather than the FHE encryption of the secret PRF key $\prfk$.
			To move to this hybrid we use the security of the FHE scheme.
			
			\item $\Hyb_6:$
			This hybrid process is identical to $\Hyb_5$, with the exception that the modified verification algorithm $\tilde{\zkV}$ from $\Hyb_2$ is now going to be a new \emph{stateful} algorithm $\tilde{\zkV}_s$.
			The new verifier $\tilde{\zkV}_s$ still makes sure that $\pi_\zkP$ is a valid proof for $(\evciph_{\zkP}, \ciph_{r_\Xi}, \ek)$, but the second check changes to the following:
			It is identical to that of $\tilde{\zkV}$, except that $\beta_x$ is now lazily sampled as a truly random string, that is, every time $\zkmP$ sends a query for some $x'$, instead of computing $\beta_{x'} = \prfF_{\prfk}(x')$, $\tilde{\zkV}_s$ samples $\beta_{x'}$ a truly random string of the same length and remembers it for future queries by the prover (for the same $x'$).	
			To move to this hybrid we use the pseudorandomnes guarantee of the PRF.
			
			\item $\Hyb_7:$
			This hybrid process is identical to $\Hyb_6$, with the exception that the behaviour of the verification algorithm $\tilde{\zkV}_s$ changes in the following way: Consider $t$ the \emph{first} time step in the execution of $\zkmP$ (in $\Hyb_6$) such that with a noticeable probability, $\zkmP$ sends a pair $(x', \ket{\pi^*})$ such that (1) $x' \in \langno$ and (2) the modified verification algorithm $\tilde{\zkV}_s$ accepts - this proof can be sent either as a query to the verification oracle, or as the final output of $\zkmP$ (in that case, $t$ is the last time step of $\zkmP$ and $x' = x$).
			
			Now we define $\Hyb_7$: the verification algorithm works as in $\Hyb_6$ with the one change that if $\zkmP$ sends a query to the verification oracle before its time step $t$ and this query is for a no-instance $x' \in \langno$, then we simply return $0$ to $\zkmP$ as the verifier's answer, without computing anything.
			Note that checking whether $x' \in \langno$ takes $2^{O(|x'|)}$ time\footnote{We assume that our gap problem $\lang \in \QMA$ has exponential-time algorithms that solve it, that is, for $x \in \lang$ we can decide whether $x \in \langyes$ or $x \in \langno$ in $2^{O(|x|)}$ time. It is also enough for our proof to assume that $\lang$ is solvable in general exponential time i.e. $O(2^{|x|^c})$ time for some constant $c \in \Nat$.}, and thus the execution of this hybrid is inefficient.
			If such time step $t$ does not exist (i.e. in each of the prover's time steps, the probability for it to generate a false proof is only negligible), this process is identical to $\Hyb_6$.
		\end{itemize}
		
		We now explain why the outputs of each two consecutive hybrids are computationally indistinguishable\footnote{the output bits of the hybrids are in fact \emph{statistically} indistinguishable, because any two distributions over a bit are statistically indistinguishable if they are computationally indistinguishable, but we won't care about this in our analysis.}.
		We will then use the last hybrid process to show that soundness of the protocol follows from the soundness of the quantum sigma protocol $(\qsigmaP, \qsigmaV)$.
		\begin{itemize}
			\item $\Hyb_0 \approx_{c} \Hyb_1:$
			Follows readily from the pseudorandomness property of the public keys generated by $\pkeGen(1^\secp)$.

			\item $\Hyb_1 \approx_{c} \Hyb_2:$
			Follows from the adaptive soundness of the NIZK protocol for NP, the statistical correctness of the PKE scheme and the perfect correctness of the FHE scheme.
			We explain in more detail: Assume the output bits of $\Hyb_1$ and $\Hyb_2$ are distinguishable with some noticeable advantage, then by the perfect correctness of the FHE evaluation, it follows that with a noticeable probability, either (1) there was an error in the decryption process of the PKE scheme at least once, or (2) $\zkmP$ generated a false proof for the NP NIZK scheme at least once.
			We prove that both happen with at most negligible probability, and thus the statistical distance between the output bits of $\Hyb_1$ and $\Hyb_2$ is at most negligible.
			
			The correctness guarantee of the PKE scheme is that when the public key is sampled honestly, which is true in our case, then with overwhelming probability over the randomness of $\pkeGen(1^\secp)$, the decryption is perfectly correct, regardless of the randomness used for the encryption (which in our case is possibly malicious, as it is chosen by $\zkmP$).
			This implies that with at most negligible probability there is an error in the decryption process $\pkeDec_{\sk}(\cdot)$.
			
			If $\zkmP$ manages to give a false proof $\pi_{\zkP}^*$ for some tuple $(\evciph_{\zkP}, \ciph_{r_\Xi}, \ek)$ with a noticeable probability $\varepsilon$ then we can use it to break the adaptive soundness of the NP NIZK scheme: We guess the index of the query (to the verification oracle $\tilde{\zkV}((\crs, \ek), (\prfk, \fhek), \cdot, \cdot)$) where $\zkmP$ gives such false proof, and with probability at least $\varepsilon \cdot \frac{1}{t}$, where $t$ is the (polynomial) running time of $\zkmP$, we find such false proof.
			This implies that $\varepsilon$ has to be at most negligible i.e. $\zkmP$ cannot produce a false proof for the NP NIZK with a noticeable probability.
			
			\item $\Hyb_2 \approx_{c} \Hyb_3:$
			Assume toward contradiction that the output bits of $\Hyb_2$ and $\Hyb_3$ are distinguishable with some noticeable advantage, we use the prover $\zkmP$ in order to construct a distinguisher $\zkmD$ that breaks the zero-knowledge property of the NP NIZK scheme (it seems that we don't have to use the fact that the zero knowledge property of the NP NIZK is adaptive, but we will use it for the convenience of the proof and because it does not cause an extra cost in computational assumptions).
			
			$\zkmD$ will sample $(\ek, \sk) \gets \pkeGen(1^\secp)$, honestly sample $(\ciph_{\zkV}, \ciph_{r_\zkV})$ with randomness $r$, and then get a common random string $\crs$ from the NIZK zero knowledge challenger.
			$\zkmD$ then hands $(\ciph_{\zkV}, \ciph_{r_\zkV}, \ek)$ along with the NP witness $r$ and gets back either a real proof or a simulated proof. it then proceeds to run the malicious prover $\zkmP$ and at the end, by the verdict of the (modified) verification algorithm $\tilde{\zkV}$ for the prover's proof and instance, distinguishes between whether it got a simulated proof or a real proof. This follows from the fact that whenever $\zkmD$ gets a real proof (and CRS) then the view of $\zkmP$ is exactly its view in $\Hyb_2$ and whenever $\zkmD$ gets a simulated proof (and CRS) then the view of $\zkmP$ is exactly its view in $\Hyb_3$.

			\item $\Hyb_3 \approx_{c} \Hyb_4:$
			Follows readily from the security of the PKE scheme.
			
			\item $\Hyb_4 \approx_{c} \Hyb_5:$
			Follows readily from the security of the FHE scheme.
			
			\item $\Hyb_5 \approx_{c} \Hyb_6:$
			Follows readily from the security of the PRF scheme.
			
			\item $\Hyb_6 \approx_{c} \Hyb_7:$
			Note that by how we defined the time step $t$ it follows that the change of returning $0$ on queries for no-instances before time step $t$ (rather than actually evaluating the verification algorithm $\tilde{\zkV}_s$) is unnoticeable to the prover $\zkmP$.
		\end{itemize}
	
		Now, assume toward contradiction that $\zkmP$ succeeds in breaking the soundness with a noticeable probability in the original execution of the protocol (i.e. in the process $\Hyb_0$), and by the fact $\Hyb_0 \approx_{c} \Hyb_7$ it follows that the verifier accepts the prover's false proof with some noticeable probability in the hybrid experiment $\Hyb_7$.
		By the fact that with some noticeable probability $\zkmP$ succeeds in cheating in $\Hyb_7$, it follows that a time step $t$ exists where $\zkmP$ sends a pair $(x', \ket{\pi^*})$ such that $x' \in \langno$ and $\tilde{\zkV}_s$ accepts the proof (this follows because in the last step of $\zkmP$'s execution it sends noticeably often a successful false proof for $x \in \langno$).
		
		Now we consider the execution process of $\Hyb_7$ and fix by an averaging argument the snapshot $\ket{\psi}$ of the execution in the exact moment where $\zkmP$ sends a pair $(x', \ket{\pi^*})$ in its time step $t$, such that the snapshot maximizes the probability that $x' \in \langno$ and $\tilde{\zkV}_s$ accepts the proof $\ket{\pi^*}$ (as a side note, this snapshot includes (1) all of the randomness (including setup information) in the process $\Hyb_7$ until $\zkmP$'s step $t$, (2) the inner quantum state of $\zkmP$ in step $t$, and of course a pair $(x', \ket{\pi^*})$ such that $x' \in \langno$.).
		It follows that the part $\alpha$ and the extracted $\gamma$ (both obtained from $\ket{\pi^*}$, recall $\gamma$ is obtained by the extracted randomness $r_{\Xi}$ and the random string $\beta_{x'}$) make a quantum sigma protocol verifier $\qsigmaV$ accept the proof for a random challenge $\beta$ with a noticeable probability.
		
		We now describe a malicious prover $\qsigmaP^*$ that breaks the soundness of the quantum sigma protocol $(\qsigmaP, \qsigmaV)$, by using $\zkmP$ and the quantum advice $\ket{\psi}$ in order to convince $\qsigmaV$ to accept the no-instance $x' \in \langno$.
		$\qsigmaP^*$ uses the snapshot $\ket{\psi}$ and takes $\alpha$ from $\ket{\pi^*}$ and sends it as the first sigma protocol message to $\qsigmaV$. $\qsigmaV$ returns a random challenge $\beta$, and $\qsigmaP^*$ treats this random challenge as the random $\beta_{x'}$ for the verification procedure $\tilde{\zkV}_{s}$.
		$\qsigmaP^*$ then derives $\gamma$ from $\ket{\pi^*}$ (as usual in $\tilde{\zkV}_s$) and sends it to $\qsigmaV$.
		Recall that we know $\qsigmaV$ accepts the proof with a noticeable probability, and thus $\qsigmaP^*$ breaks the soundness of the quantum sigma protocol with noticeable probability, in contradiction.
	\end{proof}

	We next use standard complexity leveraging to make the soundness adaptive, that is, by assuming that the security of our cryptographic primitives is sub-exponential we prove that the prover cannot choose the no-instance $x \in \langno$ adaptively. As mentioned in the preliminaries, the security of all of our primitives can be based on the hardness of LWE, and thus based on the sub-exponential hardness of LWE we can get adaptive soundness.
	
	\begin{proposition} [The Protocol has Multi-theorem Adaptive Computational Soundness] \label{prop:adaptive_soundness}
		Assume there is a constant $\varepsilon \in (0, 1)$ such that the cryptographic ingridients we use are secure against $O(2^{\secp^{\varepsilon}})$-time quantum algorithms for security paramter $\secp$.
		Then, by executing the protocol with security parameter $\secp := |x|^{\frac{2}{\varepsilon}}$ rather than $\secp = |\ins|$, for every quantum polynomial-size prover $\zkmP = \{ \zkmP_{\secp} , \rho_{\secp} \}_{\secp \in \Nat}$ there is a negligible function $\mu(\cdot)$ such that for every security parameter $\secp \in \Nat$,
				$$
				\Pr
				\Big[
				(x \in \lang_{no}) \land \big( 1 = \zkV((\crs, \ek), (\prfk, \fhek), x, \ket{\pi^*}) \big)
				\Big] \leq \mu(\secp) \enspace ,
				$$
				where the probability is above the following experiment:
				$$
				(\crs, \ek) \gets \zkSet(1^{\secp}), \enspace 
				\big( (\ciph_{\zkV}, \ciph_{r_\zkV}, \pi_\zkV), (\prfk, \fhek) \big) \gets \zkVSet(\crs, \ek),
				$$
				$$
				(x, \ket{\pi^*}) \gets \zkmP_\secp\big( \rho_\secp, (\crs, \ek), (\ciph_{\zkV}, \ciph_{r_\zkV}, \pi_\zkV) \big)^{\zkV((\crs, \ek), (\prfk, \fhek), \cdot, \cdot)} \enspace .
				$$
	\end{proposition}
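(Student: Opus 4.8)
The plan is to obtain the adaptive statement from \prpref{prop:soundness} by complexity leveraging, in three steps: (1) upgrade the qualitative conclusion of \prpref{prop:soundness} to a concrete sub-exponential bound, (2) re-scale the security parameter, and (3) union-bound over all no-instances of the relevant length to remove adaptivity.

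\textbf{Step 1 (a concrete bound from the proof of \prpref{prop:soundness}).} I would reopen the hybrid chain $\Hyb_0,\dots,\Hyb_7$ and observe that it is \emph{uniform in the no-instance}: $x$ enters only as a fixed input, and every loss is expressed purely in terms of $\secp$. The transitions $\Hyb_0\approx_c\Hyb_1$, $\Hyb_3\approx_c\Hyb_4$, $\Hyb_4\approx_c\Hyb_5$, $\Hyb_5\approx_c\Hyb_6$ lose the distinguishing advantage against public-key pseudorandomness, PKE security, FHE input privacy and PRF pseudorandomness respectively; $\Hyb_1\approx_c\Hyb_2$ loses (a $\poly(\secp)$ factor from guessing the query index, times) the NP-NIZK adaptive soundness error plus the PKE/FHE correctness errors; $\Hyb_2\approx_c\Hyb_3$ loses the NP-NIZK zero-knowledge advantage; and $\Hyb_6\approx_c\Hyb_7$ is exact. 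The final contradiction against the quantum sigma protocol loses only its \emph{statistical} soundness error $\mu_\Sigma(\secp)$ — the snapshot $\ket{\psi}$ is fixed by averaging (no loss) and the guessed time step costs at most a $\poly(\secp)$ factor. The point is that every reduction touching a computational primitive is \PPT (all of $\Hyb_0,\dots,\Hyb_6$ are efficient; $\Hyb_7$ is the only inefficient hybrid and it is compared to $\Hyb_6$ with zero loss). Hence, if the LWE-based ingredients are $O(2^{\secp^{\varepsilon}})$-secure, each such advantage is at most $2^{-\secp^{\varepsilon}}$, and since there is a constant number of hybrids with only $\poly(\secp)$ multiplicative slack, the cheating probability for \emph{every} fixed $x\in\langno$ is at most
$$
\poly(\secp)\cdot 2^{-\secp^{\varepsilon}} \;+\; \mu_\Sigma(\secp) \;\le\; 2^{-\Omega(\secp^{\varepsilon})},
$$
using that the sigma protocol's statistical soundness error can be assumed (after parallel repetition) to be at most $2^{-\Omega(\secp^{\varepsilon})}$ as well.

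\textbf{Step 2 (re-scaling).} Run the protocol with security parameter $\secp := |x|^{2/\varepsilon}$ instead of $\secp := |x|$. Since $2/\varepsilon$ is a constant, $\secp$ is polynomial in $|x|$, so $\zkSet,\zkVSet,\zkP,\zkV$ and any \PPT cheating prover remain polynomial-time in $|x|$, and completeness and zero knowledge are unaffected. With this choice, Step~1 gives, for each fixed $x\in\langno$ of length $n=|x|$, cheating probability at most $2^{-\Omega(\secp^{\varepsilon})} = 2^{-\Omega(n^{2})}$.

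\textbf{Step 3 (union bound / removing adaptivity).} Let $\zkmP$ be a \PPT adaptive prover that outputs $(x,\ket{\pi^*})$ after seeing $(\crs,\ek)$, $\pvk$ and querying $\zkV$; condition on $|x|=n$. For each fixed $x'\in\langno\cap\zo^{n}$, running $\zkmP$ and taking its proof component yields a \emph{non-adaptive} prover against $x'$, so by Step~2 the probability that $\zkmP$ outputs $(x',\ket{\pi^*})$ with $\zkV$ accepting is at most $2^{-\Omega(n^{2})}$. There are at most $2^{n}$ such $x'$, hence
$$
\Pr\big[(x\in\langno)\wedge(1=\zkV((\crs,\ek),(\prfk,\fhek),x,\ket{\pi^*}))\big]\;\le\;2^{n}\cdot 2^{-\Omega(n^{2})}\;=\;2^{-\Omega(n^{2})},
$$
which is negligible in $n=|x|$ and therefore in $\secp=|x|^{2/\varepsilon}$. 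This establishes multi-theorem adaptive computational soundness; since all ingredients follow from LWE, sub-exponential LWE suffices. The delicate point is Step~1: one must check that the proof of \prpref{prop:soundness} really loses only a $\poly(\secp)$ factor on top of the primitives' advantages (in particular that the query-index guess and the snapshot averaging contribute at most polynomially, and that no step is secretly super-polynomial), so that sub-exponential hardness yields a $2^{-\Omega(\secp^{\varepsilon})}$ bound strong enough to survive the $2^{n}$ union bound after the $\secp=n^{2/\varepsilon}$ re-scaling. The rest is routine bookkeeping.
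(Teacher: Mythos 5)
Your proof is correct in outline, but it takes a genuinely different route from the paper's. You use the ``guess/union-bound over the instance'' flavor of complexity leveraging: make the \emph{non-adaptive} bound of \prpref{prop:soundness} quantitatively sub-exponential, rescale so that it is $2^{-\Omega(|x|^2)}$ per fixed no-instance, and pay a $2^{|x|}$ union bound to remove adaptivity. The paper instead keeps everything asymptotic and makes the \emph{reduction} inefficient: it reruns the same hybrids $\Hyb_0,\dots,\Hyb_7$ with the output bit redefined as ``verifier accepts \emph{and} $x\in\langno$,'' so each reduction must decide membership in $\langno$ by brute force in $2^{O(|x|)}$ time; since the primitives are assumed secure against $O(2^{\secp^{\varepsilon}})=O(2^{|x|^2})$-time adversaries, these inefficient reductions still only achieve negligible advantage, and the final step is unchanged because the sigma protocol's soundness is statistical. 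The trade-off is instructive. The paper's route never needs concrete advantage bounds --- ``negligible against $2^{\secp^\varepsilon}$-time algorithms'' suffices --- whereas your route needs every loss in the proof of \prpref{prop:soundness} to be pinned down to $2^{-\omega(|x|)}$, which is more than the ``routine bookkeeping'' you suggest at one point. In particular, your claim that $\Hyb_6\approx\Hyb_7$ is ``exact'' is not right: that transition, and the final averaging argument, are built around the inherently asymptotic definition of the time step $t$ (``the first step at which a false proof is accepted with \emph{noticeable} probability''), which does not directly yield a $2^{-\secp^{\varepsilon}}$-type bound. To make your Step 1 go through you would replace $\Hyb_7$ by a query-by-query hybrid (zero out the first $j$ no-instance queries, for $j=0,\dots,Q$) and bound each increment by the sigma protocol's statistical soundness error, which must itself be driven below $2^{-\omega(|x|)}$ by choosing the number of parallel repetitions appropriately; you also tacitly read ``secure against $O(2^{\secp^\varepsilon})$-time algorithms'' as ``advantage at most $2^{-\secp^\varepsilon}$,'' a slightly stronger (though standard) reading than the paper's proof requires. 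With those repairs your argument is sound and, arguably, more self-contained, since it never runs a reduction in super-polynomial time; the paper's argument is shorter because it reuses \prpref{prop:soundness} verbatim.
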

	
	\begin{proof}
		The proof is almost identical to the proof of Proposition \ref{prop:soundness}, with minor technical changes.
		Let $\zkmP = \{ \zkmP_\secp, \rho_\secp \}_{\secp \in \Nat}$ a polynomial-size quantum prover in \proref{fig:nizk_qma} and as before, we prove soundness by a hybrid argument by considering almost the same series of hybrids processes, and the reductions that show the outputs of each consecutive pair of hybrids are indistinguishable, are also going to be slightly different.
		
		More precisely, consider the exact same hybrids $\Hyb_0, \cdots, \Hyb_7$ from the proof of Proposition \ref{prop:soundness}, with only the following differences:
		\begin{itemize}
			\item
			With accordance to the fact that we consider adaptive provers, in each hybrid process, the output of the malicious prover at the end of the execution is a pair $(x, \ket{\pi^*})$ rather than only a proof $\ket{\pi^*}$.
			
			\item
			The output of each hybrid process is still a bit, but going to be the logical AND of (1) the verifier accepting the prover's proof and instance $x$, and (2) the instance $x$ is indeed a no-instance $x \in \langno$ (note that in the proof for Proposition \ref{prop:soundness} the output bit of the hybrids only considers the verdict of the verifier, as the no-instance $x \in \langno$ is already fixed).
		\end{itemize}
		
		We will next claim that the outputs of each pair of consecutive hybrids are computationally indistinguishable. For this, we will use the fact that given $x \in \lang = \langyes \cup \langno$, we can decide whether $x \in \langno$ or not in $2^{O(|x|)}$ time.\footnote{As noted before, the proof is not sensitive to the fact that the time complexity is $2^{O(|x|)}$ and not $O(2^{|x|^c})$ time for some constant $c \in \Nat$.}
		We also use the fact that our primitives are assumed to be secure against sub-exponential time algorithms and we run the protocol with increased security parameter, more specifically, we assume that our primitives are secure against $O(2^{\secp^{\varepsilon}})$-time algorithms and we use security paramter $\secp = |x|^{\frac{2}{\varepsilon}}$, thus it follows that no $O(2^{\secp^{\varepsilon}}) = O(2^{|x|^2})$-time algorithm can break the security of the primitives.
		
		In continuance to the above, by the exact same reductions from the proof of Proposition \ref{prop:soundness} with a single change, we have
		$$
		\Hyb_0 \approx_{c}
		\Hyb_1 \approx_{c}
		\Hyb_2 \approx_{c}
		\Hyb_3 \approx_{c}
		\Hyb_4 \approx_{c}
		\Hyb_5 \approx_{c}
		\Hyb_6 \approx_{c}
		\Hyb_7 \enspace .
		$$
		The single change that we refer to is the check that the reduction makes when getting the final output of the prover.
		In the proof of Proposition \ref{prop:soundness}, the final output of $\zkmP$ is a false proof $\ket{\pi^*}$ for a specific and pre-chosen $x$, while in our case (the adaptive case) it is a pair $(x, \ket{\pi^*})$ for an adaptively-chosen $x$.
		Instead of checking only the verdict of $\zkV$, which can be done in polynomial time, the reduction in our case will also check that $x \in \langno$, which can be done in time $2^{O(|x|)}$.
		This implies that our security reductions take $2^{O(|x|)}$ time to execute, but they break primitives with security against $O(2^{|x|^2})$-time algorithms, which constitutes the needed contradiction.
		Finally, the algorithm $\qsigmaP^*$ that uses $\zkmP$ in the process $\Hyb_7$ in order to break the soundness of the quantum sigma protocol is exactly the same as before, and our proof is finished.
	\end{proof}

	As mentioned before, by the fact that the security of the cryptographic ingridients in our protocol can be based on the hardness of LWE and the security reductions for the primitives are polynomial-time, we get the following corollary.
	
	\begin{corollary}
		Assume there is a constant $\varepsilon \in (0, 1)$ such that LWE is hard for $O(2^{n^{\varepsilon}})$-time quantum algorithms (for LWE secret of $n$ bits).
		Then, for every quantum polynomial-size prover $\zkmP = \{ \zkmP_{\secp} , \rho_{\secp} \}_{\secp \in \Nat}$ there is a negligible function $\mu(\cdot)$ such that for every security parameter $\secp \in \Nat$,
		$$
		\Pr
		\Big[
		(x \in \lang_{no}) \land \big( 1 = \zkV((\crs, \ek), (\prfk, \fhek), x, \ket{\pi^*}) \big)
		\Big] \leq \mu(\secp) \enspace ,
		$$
		where the probability is above the following experiment:
		$$
		(\crs, \ek) \gets \zkSet(1^{\secp}), \enspace 
		\big( (\ciph_{\zkV}, \ciph_{r_\zkV}, \pi_\zkV), (\prfk, \fhek) \big) \gets \zkVSet(\crs, \ek),
		$$
		$$
		(x, \ket{\pi^*}) \gets \zkmP_\secp\big( \rho_\secp, (\crs, \ek), (\ciph_{\zkV}, \ciph_{r_\zkV}, \pi_\zkV) \big)^{\zkV((\crs, \ek), (\prfk, \fhek), \cdot, \cdot)} \enspace .
		$$
		\end{corollary}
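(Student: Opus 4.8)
The plan is to derive this corollary directly from \prpref{prop:adaptive_soundness} by instantiating every computational ingredient of the protocol from LWE and checking that a sub-exponential hardness bound for LWE transfers, essentially unchanged, to a sub-exponential security bound for each ingredient. First I would recall that all the primitives invoked in the construction --- the NIZK argument for NP in the common random string model \cite{canetti2019fiat, peikert2019noninteractive}, the pseudorandom function (which follows from one-way functions, hence from LWE), the circuit-private leveled FHE scheme \cite{ostrovsky2014maliciously, brakerski2018two}, and the pseudorandom-key public-key encryption scheme \cite{Regev09} --- are known to exist under the quantum hardness of LWE, and that in each case the security reduction is a fixed-polynomial-time transformation: a time-$T$ adversary against the primitive with security parameter $\secp$ yields a time-$\poly(T,\secp)$ distinguisher for LWE instances whose secret dimension is $n = \secp^{O(1)}$. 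The soundness of the quantum sigma protocol is statistical, so it contributes no computational assumption.

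Second I would carry out the complexity-leveraging bookkeeping. Fix the constant $\varepsilon \in (0,1)$ for which LWE with an $n$-bit secret is hard for $O(2^{n^{\varepsilon}})$-time quantum algorithms. Each primitive uses LWE with secret dimension $n = \secp^{c}$ for some constant $c \ge 1$, so hardness of LWE in terms of the security parameter rules out $O(2^{\secp^{c\varepsilon}})$-time attacks; composing with the polynomial reduction, a time-$T$ attack on a primitive at parameter $\secp$ yields an LWE attack running in time $\poly(T,\secp)$, and for $T = 2^{\secp^{\varepsilon/2}}$ this is $2^{O(\secp^{\varepsilon/2})}$, which lies below $2^{\secp^{c\varepsilon}}$ --- indeed below $2^{\secp^{\varepsilon}}$, using $c \ge 1$ --- for all sufficiently large $\secp$. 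Hence, setting $\varepsilon' := \varepsilon/2$, every cryptographic ingredient of the protocol is secure against $O(2^{\secp^{\varepsilon'}})$-time quantum algorithms, which is exactly the hypothesis required by \prpref{prop:adaptive_soundness}.

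Finally I would invoke \prpref{prop:adaptive_soundness} with the constant $\varepsilon'$: executing the protocol with security parameter $\secp := |x|^{2/\varepsilon'}$ gives multi-theorem adaptive computational soundness, which is precisely the statement of the corollary (the probability experiment and ``for every $\secp$'' quantifier being read with $\secp$ set this way, exactly as in \prpref{prop:adaptive_soundness}). I do not anticipate a genuine obstacle --- the argument is routine complexity leveraging --- so the only care needed is the exponent bookkeeping above, namely ensuring that the polynomial overhead of each primitive's reduction, and (inside the proof of \prpref{prop:adaptive_soundness}) the inefficient $2^{O(|x|)}$-time test of whether $x \in \langno$, both stay comfortably below the sub-exponential hardness threshold, for which the slack between $\varepsilon'$ and $\varepsilon$ more than suffices.
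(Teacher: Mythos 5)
Your proposal is correct and follows essentially the same route as the paper, which derives the corollary in one sentence from Proposition \ref{prop:adaptive_soundness} by observing that all the cryptographic ingredients can be based on LWE via polynomial-time security reductions, so sub-exponential hardness of LWE yields the sub-exponential security hypothesis of that proposition. Your explicit exponent bookkeeping (introducing $\varepsilon' = \varepsilon/2$ to absorb the polynomial reduction overhead) is just a more careful spelling-out of the same argument.
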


	\subsection{Zero Knowledge}
	We show that the protocol is multi-theorem adaptive computational zero-knowledge\footnote{It would have been enough to show that the protocol is \emph{single-theorem} adaptive computational zero-knowledge, and then by the single-to-multi-theorem compiler for NIZKs of \cite{feige1999multiple} get a MDV-NICZK argument with adaptive \emph{multi-theorem} security, but for the sake of completeness, because our construction can be shown to be multi-theorem zero-knowledge without the FLS compilation and because it does not change the main ideas in the proof, we prove the multi-theorem case directly.}, which holds even when the trusted setup samples only a common uniformly random string, and an adversarial polynomial-time (quantum) verifier samples its public verification key maliciously.
	
	We next describe the simulator and then prove that the view that it generates is indistinguishable from the real one, against adaptive distinguishers that choose the statement to be proven only after seeing the common random string.
	
	\paragraph{$\zkSim(1^\secp):$}
	\begin{enumerate}
		\item {\bf CRS Simulation:}
		Given a security parameter $\secp$, the first simulator output is the simulation of the CRS for the NP NIZK protocol and swapping $\ek$ with a public key for the PKE scheme, that is, $\zkSim$ samples:
		$$
		(\crs, \td) \gets \nizkSim(1^\secp) \enspace , 
		(\ek, \sk) \gets \pkeGen(1^\secp) \enspace ,
		$$
		outputs $(\crs, \ek)$ as the simulated CRS and $(\td, \sk)$ as the simulator trapdoor.
		
		\item {\bf Proof Simulation:}
		Given the trapdoor $(\td, \sk)$, a (possibly malicious) public verification key $\pvk = (\ciph_{\zkV}, \ciph_{r_\zkV}, \pi_{\zkV})$ and a yes-instance $x \in \langyes$, the simulator does the following:
		\begin{enumerate}
			\item
			$\zkSim$ checks that $\pi_\zkV$ is a valid proof for the tuple $(\ciph_{\zkV}, \ciph_{r_\zkV}, \ek)$ and also actually verifies some of the statement itself: It decrypts $r_\zkV = \pkeDec_{\sk}(\ciph_{r_\zkV})$ and checks that $\ciph_{\zkV}$ is obtained by running $\prfGen, \fhek \gets \fheGen, \fheEnc_{\fhek}$ with randomness $r_\zkV$.
			If the check is not accepted, $\zkSim$ returns $\bot$.
			
			\item
			$\zkSim$ derives $\prfk$ from $r_\zkV$, computes $\beta_x = \prfF_{\prfk}(x)$ and then executes $(\alpha, \gamma) \gets \qsigmaS(x, \beta_x)$.
			
			\item
			$\zkSim$ performs a circuit-private homomorphic evaluation $\evciph_\zkP \gets \fheEval(C_{\gamma}, \ciph_\zkV)$, where $C_\gamma$ is the circuit that always outputs $\gamma$.
			
			\item
			$\zkSim$ encrypts $\ciph_{r_{\Xi}} \gets \pkeEnc_{\ek}(0^{\ell})$, where $\ell$ is the length of the randomness for the prover in the quantum sigma protocol.
			
			\item
			Finally, $\zkSim$ simulates the non-interactive zero-knowledge proof $\pi_\zkP$, by executing $\pi_{\zkP} \gets \nizkSim(\td, (\evciph_{\zkP}, \ciph_{r_\Xi}, \ek))$.
		\end{enumerate}
		$\zkSim$ outputs $(\alpha, \evciph_{\zkP}, \ciph_{r_{\Xi}}, \pi_{\zkP})$.
	\end{enumerate}
	
	We now prove that the simulated proofs that the simulator generates are computationally indistinguishable from the real proofs that the prover generates.
	
	\begin{proposition} [The Protocol is Multi-theorem Adaptive Computational Zero-knowledge] 
		For every quantum polynomial-size distinguisher $\Disting = \{ \Disting_\secp , \rho_\secp \}_{\secp \in \Nat}$ there is a negligible function $\mu(\cdot)$ such that for every security parameter $\secp \in \Nat$,
		$$
		\abs{P_{\secp, \real} - P_{\secp, \simulated}} \leq \mu(\secp) \enspace ,
		$$
		where,
		$$
		P_{\secp, \real} :=
		\Pr_{(\crs, \ek) \gets \zkSet(1^\secp)}
		\Big[
		\Disting_{\secp}(\rho_{\secp}, (\crs, \ek))^{\zkP((\crs, \ek), \cdot, \cdot, \cdot)} = 1
		\Big] \enspace ,
		$$
		$$
		P_{\secp, \simulated} :=
		\Pr_{((\crs, \ek), (\td, \sk)) \gets \zkSim(1^\secp)}
		\Big[
		\Disting_{\secp}(\rho_{\secp}, (\crs, \ek))^{\zkSim((\td, \sk), \cdot, \cdot)} = 1
		\Big] \enspace ,
		$$
		where in every query that $\Disting$ makes to the oracle, it sends a triplet $(\mpvk, x, \qwit^{\otimes k(\secp)})$ such that $\mpvk$ can be arbitrary, $x \in \langyes\cap \{0, 1\}^\secp$ and $\qwit \in \rel_{\lang}(x)$.
	\end{proposition}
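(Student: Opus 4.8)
The plan is to prove $|P_{\secp,\real}-P_{\secp,\simulated}|\le\negl(\secp)$ by a sequence of hybrid experiments that morphs the real experiment (honest setup, honest prover oracle $\zkP((\crs,\ek),\cdot,\cdot,\cdot)$) into the simulated one ($\zkSim$'s setup and oracle), mirroring the hybrid structure of the soundness proof (\prpref{prop:soundness}). Let $Q=Q(\secp)$ bound the number of oracle queries $\Disting$ makes; statistical losses are multiplied by $Q$ via a union bound and computational losses via a standard per-query sub-hybrid. The point that dictates the \emph{order} of the steps is that the simulator's oracle uses the decryption key $\sk$ (to extract, from a possibly malicious $\pvk$, the PRF key $\prfk$ buried inside $\ciph_\zkV$), whereas the reduction that replaces the prover's $\ciph_{r_\Xi}$ by an encryption of zeros must \emph{not} know $\sk$; hence $\ciph_{r_\Xi}$ must be switched before the $\sk$-dependent check is introduced.

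Concretely, I would use:
\begin{itemize}
	\item $\Hyb_0$: the real experiment.
	\item $\Hyb_1$: sample $(\ek,\sk)\gets\pkeGen(1^\secp)$ instead of $\ek\gets U_{\ell(\secp)}$ (the oracle is still the honest prover and ignores $\sk$). Indistinguishable by public-key pseudorandomness of the PKE.
	\item $\Hyb_2$: simulate the NP NIZK: sample $(\crs,\td)\gets\nizkSim(1^\secp)$ and answer each query with $\pi_\zkP\gets\nizkSim(\td,(\evciph_\zkP,\ciph_{r_\Xi},\ek))$; the abort test is still $1=\nizkV(\crs,(\ciph_\zkV,\ciph_{r_\zkV},\ek),\pi_\zkV)$ and $\alpha,\evciph_\zkP,\ciph_{r_\Xi}$ are still computed honestly. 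Indistinguishable by multi-theorem adaptive zero knowledge of the NP NIZK; the reduction supplies, for each query, the honest NP witness $(r_\Xi,r_1,r_2)$ it used, and it runs the ($\sk$-free) abort test itself.
	\item $\Hyb_3$: set $\ciph_{r_\Xi}\gets\pkeEnc_\ek(0^{\ell})$ instead of $\pkeEnc_\ek(r_\Xi)$ (the NP statement proved by $\pi_\zkP$ is now false, but $\pi_\zkP$ is already simulated, so this is fine). Indistinguishable by encryption security of the PKE, via a per-query sub-hybrid; the reduction sets $\ek=\pk$ from the challenger, holds $\td$, and needs no $\sk$.
	\item $\Hyb_4$: strengthen the abort test to $\zkSim$'s: additionally decrypt $r_\zkV=\pkeDec_\sk(\ciph_{r_\zkV})$, derive $(\prfk,\fhek)$ from $r_\zkV$, and abort unless $\ciph_\zkV$ is exactly the ciphertext obtained from $\prfk$ by this $r_\zkV$. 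The behaviour changes only when some $\pi_\zkV$ inside a malicious $\mpvk$ is an accepting NP proof for a false statement, or $\ek$ is a bad key; by adaptive soundness of the NP NIZK on the simulated $\crs$ (which follows since the simulated and real $\crs$ are computationally indistinguishable) and statistical correctness of the PKE against malicious encryptors, both events have negligible probability.
	\item $\Hyb_5$: compute $\evciph_\zkP\gets\fheEval(C_\gamma,\ciph_\zkV)$, where $\gamma=\qsigmaP_3(\prfF_\prfk(x),r_\Xi)$ and $C_\gamma$ is the (suitably padded) constant circuit outputting $\gamma$, using the $\prfk$ recovered in $\Hyb_4$. Statistically close by FHE circuit privacy: the $\Hyb_4$ test guarantees $\ciph_\zkV\in\fheEnc(\prfk)$, hence $\fheExt(\ciph_\zkV)=\prfk$ and $C_{x,r_\Xi}(\prfk)=\gamma=C_\gamma(\prfk)$, so both evaluations are $\approx_s\fheSim(1^\secp,\gamma)$.
	\item $\Hyb_6$: replace $\alpha=\qsigmaP(\qwit^{\otimes k(\secp)};r_\Xi)$, $\gamma=\qsigmaP_3(\beta_x,r_\Xi)$ (same $r_\Xi\gets U$) by $(\alpha,\gamma)\gets\qsigmaS(x,\beta_x)$ with $\beta_x=\prfF_\prfk(x)$; now $r_\Xi$ is unused. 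Indistinguishable by special zero knowledge of the quantum sigma protocol, via a per-query sub-hybrid (the ``for every $x,\qwit,\beta$'' form of special zero knowledge upgrades, by averaging, to the adaptive single-switch reduction, which reads $(x,\qwit^{\otimes k(\secp)},\beta_x)$ off $\Disting$'s query and, for the challenge index, hands the provided witness copies to the external sampler).
\end{itemize}
It then remains to observe that $\Hyb_6$ is \emph{syntactically identical} to the simulated experiment: the setup is $(\crs,\td)\gets\nizkSim$, $(\ek,\sk)\gets\pkeGen$, and the oracle is precisely $\zkSim((\td,\sk),\cdot,\cdot)$. Summing the seven bounds gives $|P_{\secp,\real}-P_{\secp,\simulated}|\le\negl(\secp)$.

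The main obstacle is conceptual rather than computational: making the simulator faithful against a \emph{maliciously} generated public verification key. The simulator cannot trust $\pi_\zkV$; it must actually recover the verifier's PRF key $\prfk$ in order to know the instance-dependent challenge $\beta_x=\prfF_\prfk(x)$ and run $\qsigmaS$ on it, and its only handle is the PKE trapdoor planted in $\ek$ (step $\Hyb_1$) together with soundness of the NP NIZK (step $\Hyb_4$), which certifies that whatever it decrypts is consistent with $\ciph_\zkV$ and hence that $\Hyb_5$ may legitimately rewrite $\evciph_\zkP$ as a homomorphic encryption of the explicit $\gamma$ that $\qsigmaS$ will produce. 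The delicate bookkeeping is therefore the ordering of $\Hyb_3$ and $\Hyb_4$ (switching $\ciph_{r_\Xi}$ while the oracle is still $\sk$-free), and, in $\Hyb_6$, handling the consumed quantum witness by replacing the entire witness-dependent computation atomically with a single invocation of $\qsigmaS$.
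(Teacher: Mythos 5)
Your proposal is correct and uses the same seven-hybrid decomposition and the same tools as the paper's proof; the only genuine difference is the \emph{ordering} of the middle steps, and the difference is worth noting. The paper introduces the $\sk$-dependent extract-and-check on $\ciph_{r_\zkV}$ immediately after planting the PKE key (its $\Hyb_1\to\Hyb_2$), while the CRS is still real, so it can invoke plain adaptive soundness of the NP NIZK; the price is that its later reduction replacing $\ciph_{r_\Xi}$ by an encryption of zeros must simulate an oracle that decrypts an adversarially chosen ciphertext under $\ek$, which an IND-CPA reduction cannot do directly --- this goes through only because, once $\pi_\zkV$ verifies, the added check is a no-op except with negligible probability, a point the paper leaves implicit. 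You instead defer the $\sk$-dependent check until after the NIZK simulation and the $\ciph_{r_\Xi}$ switch, so your PKE reduction is genuinely $\sk$-free; the price you pay is having to argue adaptive soundness of the NP NIZK relative to the \emph{simulated} CRS, which you correctly reduce to CRS indistinguishability plus the fact that the bad event is efficiently detectable given $\sk$. Both orderings are sound and each trades one technicality for the other; everything else (the witness-supplying NIZK-ZK reduction, the per-query sub-hybrids, the use of Claim~\ref{claim:SFE_eval} for the $C_{x,r_\Xi}$-to-$C_\gamma$ switch justified by the extract-and-check, the atomic replacement by $\qsigmaS$, and the identification of the final hybrid with $\zkSim$) matches the paper.
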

	
	\begin{proof}
		Let $\Disting = \{ \Disting_\secp, \rho_\secp \}_{\secp \in \Nat}$ a polynomial-size quantum distinguisher.
		We prove zero knowledge by a hybrid argument, that is, we consider a series of computationally-indistinguishable hybrid processes with 1-bit outputs, starting from the output of $\Disting$ when getting real proofs, until we get to the output of $\Disting$ when getting simulated proofs.
		We define the following processes.
		\begin{itemize}
			\item $\Hyb_0:$
			The output of $\Disting$ when getting honestly-generated proofs, that is, it gets the CRS from $(\crs, \ek) \gets \zkSet(1^\secp)$ and the proofs from $\zkmP((\crs, \ek), \cdot, \cdot, \cdot)$, as described in the experiment of $P_{\real}$.
			
			\item $\Hyb_1:$
			This hybrid process is identical to $\Hyb_0$, with the exception that $\ek$ is sampled as a public key for the PKE scheme $(\ek, \sk) \gets \pkeGen(1^\secp)$, rather than as a random string of the same length.
			To move to this hybrid we will use the fact that the public keys of the PKE scheme are pseudorandom.
			
			\item $\Hyb_2:$
			This hybrid process is identical to $\Hyb_1$, with the exception that the prover adds another validity check, over the one checking the validity of the proof $\pi_{\zkV}$: It decrypts $r_\zkV = \pkeDec_{\sk}(\ciph_{r_\zkV})$ and checks that $\ciph_{\zkV}$ is obtained by running $\prfGen, \fhek \gets \fheGen, \fheEnc_{\fhek}$ with randomness $r_\zkV$.
			To move to this hybrid we will use the adaptive soundness of the NP NIZK.
			
			\item $\Hyb_3:$
			This hybrid process is identical to $\Hyb_2$, with the exception that we simulate the NP NIZK proofs, that is, (1) when sampling the NP NIZK common random string $\crs$ from the total CRS $(\crs, \ek)$, we sample a simulated CRS $(\crs, \td) \gets \nizkSim(1^\secp)$ instead of $\crs \gets \nizkSet(1^\secp)$, and (2) every time we compute an NP NIZK proof $\pi_\zkP$ as part of the QMA NIZK proof $\ket{\pi}$, we use the NP NIZK simulator $\pi_\zkP \gets \nizkSim(\td, (\evciph_{\zkP}, \ciph_{r_{\Xi}}, \ek))$ rather than $\pi_\zkP \gets \nizkP(\crs, (\evciph_{\zkP}, \ciph_{r_{\Xi}}, \ek))$ (where we execute $\nizkP$ along with a witness for the statement).
			To move to this hybrid we will use the adaptive zero knowledge property of the NP NIZK.
			
			\item $\Hyb_4:$
			This hybrid process is identical to $\Hyb_3$, with the exception that $\ciph_{r_{\Xi}}$ is an encryption of zeros rather than the randomness for the circuit $C_{x, r_\Xi}$, which is homomorphically evaluated.
			To move to this hybrid we will use the security of the PKE scheme.
			
			\item $\Hyb_5:$
			This hybrid process is identical to $\Hyb_4$, with the exception that when computing the evaluated ciphertext $\evciph_{\zkP}$, instead of homomorphically evaluating the circuit $C_{x, r_\Xi}$, we compute $C_{x, r_\Xi}$ in the clear and inject the result by circuit-private evaluation.
			More precisely, the prover does the following: First, it regularly computes $\alpha = \qsigmaP(\ket{w}^{\otimes k(\secp)} ; r_\Xi)$, for randomness $r_\Xi$.
			It derives $\prfk$ from the decrypted randomness $r_\zkV$, computes $\beta_x = \prfF_{\prfk}(x)$, $\gamma = \qsigmaP_3(\beta_x, r_\Xi)$, and then $\evciph_{\zkP} \gets \fheEval(C_{\gamma}, \ciph_{\zkV})$, where $C_{\gamma}$ is the circuit that always outputs $\gamma$.
			To move to this hybrid we will use the circuit-privacy property of the FHE's evaluation algorithm.
			
			\item $\Hyb_6:$
			This hybrid process is identical to $\Hyb_5$, with the exception that when computing $(\alpha, \gamma)$ we use the quantum sigma protocol (special zero-knowledge) simulator, that is, the prover first computes $\beta_x$ (from $\prfk$ which is derived from $r_\zkV$) and then computes $(\alpha, \gamma) \gets \qsigmaS(x, \beta_x)$ and as before, $\alpha$ is sent in the clear and $\gamma$ is sent through homomorphically evaluating the circuit $C_{\gamma}$ on $\ciph_\zkV$.
			To move to this hybrid we will use the special zero knowledge property of the quantum sigma protocol.
			Note that the actions of the prover in this hybrid process are exactly the ones of the QMA NIZK simulator $\zkSim$ and thus $\Hyb_6$ is exactly the process described in the experiment of $P_{\simulated}$.
		\end{itemize}
		We now claim that the outputs of each two consecutive hybrids are computationally indistinguishable, which will finish our proof.
		
		\begin{itemize}
			\item $\Hyb_0 \approx_{s} \Hyb_1:$
			Follows readily from the pseudorandomness property of the public keys generated by $\pkeGen(1^\secp)$.
			
			\item $\Hyb_1 \approx_{s} \Hyb_2:$
			Follows from the adaptive soundness of the NIZK protocol for NP and the statistical correctness of the PKE scheme.
			We explain in more detail: First, note that whenever the NP statement that $\Disting$ proves in $\pi_\zkV$ is correct and the decryption of the PKE is correct, then the output distribution of the proof oracle is identical between the two hybrid processes, as the additional check that is made in $\Hyb_2$ passes successfully. Also note that whenever the proof $\pi_{\zkV}$ is invalid, then both processes output $\bot$ and are identical.
			It follows that the only times that the output distributions of the proof oracles are not identical is whenever there is an error in the decryption of the PKE, or the proof $\pi_{\zkV}$ checks successfully but the statement is false i.e. whenever $\Disting$ breaks the adaptive soundness of the NP NIZK protocol.
			Since both of the above happen with at most negligible probability, it follows that only with negligible probability the outputs of $\Hyb_1$ and $\Hyb_2$ can be distinguished, and the statistical closeness between them follows.
			
			\item $\Hyb_2 \approx_{c} \Hyb_3:$
			Follows readily from the adaptive zero-knowledge property of the NP NIZK protocol.
			
			\item $\Hyb_3 \approx_{c} \Hyb_4:$
			Follows from the security of the PKE scheme.
			Specifically, the encrypted randomness $r_\Xi$ for every query is simply a random string (independent of all other operations in the process) and thus all of these random strings can be chosen at the beginning of the execution of the process, and thus we fix by an averaging argument the strings $r^1_\Xi, \cdots, r^q_\Xi$ that maximize the distinguishability of $\Disting$, where the $q$ is the (polynomial) number of queries that $\Disting$ makes to the proof oracle.
			It then follows that if $\Disting$ distinguishes between $\Hyb_3$ and $\Hyb_4$ then it distinguishes between encryptions of $r^1_\Xi, \cdots, r^q_\Xi$ and encryptions of zeros, and since the single-message security of public-key encryption schemes implies many-message security the indistinguishability $\Hyb_3 \approx_{c} \Hyb_4$ follow.
			
			\item $\Hyb_4 \approx_{s} \Hyb_5:$
			Follows by a hybrid argument, by the circuit-privacy property of the FHE scheme and from the fact that the prover makes the additional check on the public verification key, which checks that $\ciph_{\zkV}$ is obtained by running $\prfGen, \fhek \gets \fheGen, \fheEnc_{\fhek}$ with the extracted randomness $r_\zkV$.
			More precisely, let $q$ be the number of queries that $\Disting$ makes to the proof oracle, and for $i \in \{ 0, 1, \cdots, q \}$ we define $\Hyb^{i}_4$ as the process that performs the homomorphic evaluation of $C_{x, r_\Xi}$ (rather than computing it in the clear and then injecting the result, as done in $\Hyb_5$) starting from query number $i+1$ that $\Disting$ makes, thus $\Hyb^{0}_4 = \Hyb_4$, $\Hyb^{q}_4 = \Hyb_5$.
			
			If $\Hyb_4$ and $\Hyb_5$ are distinguishable then for some $i \in \{ 0, 1, \cdots, q-1 \}$, $\Hyb^{i}_4$ and $\Hyb^{i+1}_4$ are distinguishable. We fix by an averaging argument a snapshot of the execution until after the point that $\Disting$ sends the $(i+1)$-th query to the proof oracle.
			If the check that the prover makes in the beginning, which includes both checking the validity of the NP proof $\pi_{\zkV}$ and also checking the validity of creating $\ciph_{\zkV}$ from the extracted randomness $r_\zkV$, fails, then the hybrid processes are the same as the answer of the proof oracle will be $\bot$.
			In case the check is successful, it follows that the outputs of the circuits $C_{x, r_\Xi}$ and $C_\gamma$ on the input $\prfk$ (which is encrypted inside $\ciph_{\zkV}$) are the same, and thus it follows that the distinguisher between the hybrids $\Hyb^{i}_4$ and $\Hyb^{i+1}_4$ can be used to break the (even statistical) circuit privacy of the FHE evaluation.
			
			\item $\Hyb_5 \approx_{c} \Hyb_6:$
			The proof is very similar to the proof for the indistinguishability $\Hyb_4 \approx_{c} \Hyb_5$, as the indistinguishability follows by a hybrid argument and from the special zero knowledge property of the quantum sigma protocol.
			More precisely, for $i \in \{ 0, 1, \cdots, q \}$ we define $\Hyb^{i}_5$ as the process that uses $\qsigmaP$ (and the polynomially-many copies of the quantum witness) in order to generate $(\alpha, \gamma)$ (rather than computing it using the simulator) starting from query number $i+1$ that $\Disting$ makes, thus $\Hyb^{0}_5 = \Hyb_5$, $\Hyb^{q}_5 = \Hyb_6$.
			
			If $\Hyb_5$ and $\Hyb_6$ are distinguishable then for some $i \in \{ 0, 1, \cdots, q-1 \}$, $\Hyb^{i}_5$ and $\Hyb^{i+1}_5$ are distinguishable. We fix by an averaging argument a snapshot of the execution until after the point that $\Disting$ sends the $(i+1)$-th query to the proof oracle, this in particular fixes the yes instance $x \in \langyes$, the quantum witness $\qwit$ and the pseudorandomness $\beta_x$.
			It follows that the distinguisher between the hybrids $\Hyb^{i}_5$ and $\Hyb^{i+1}_5$ can be used to tell the difference between a tuple $(\alpha, \gamma)$ that was generated by $\qsigmaP$ and a tuple that was generated by $\qsigmaS$, in contradiction the special zero knowledge property of the protocol $(\qsigmaP, \qsigmaV)$.
		\end{itemize}
		
	\end{proof}

\bibliographystyle{alpha}
\bibliography{Bibliography}	

\end{document}